\begin{document}
\title{Introducing SWIRL: An Intermediate Representation Language for Scientific Workflows}
\titlerunning{An Intermediate Representation Language for Scientific Workflows}
% If the paper title is too long for the running head, you can set
% an abbreviated paper title here
%
\author{Iacopo Colonnelli\inst{1}
\orcidlink{0000-0001-9290-2017} 
\and
Doriana Medić(\Letter)\inst{1}
\orcidlink{0000-0002-7163-5375} 
\and
Alberto Mulone\inst{1}
\orcidlink{0009-0009-2600-613X}
\and
Viviana Bono\inst{1}
\orcidlink{0000-0002-2533-0511}
\and
Luca Padovani\inst{2}
\orcidlink{0000-0001-9097-1297}
\and
Marco Aldinucci\inst{1}
\orcidlink{0000-0001-8788-0829}
}
\authorrunning{Colonnelli, Medić, Mulone, Bono, Padovani and Aldinucci}
% First names are abbreviated in the running head.
% If there are more than two authors, 'et al.' is used.
%
\institute{University of Turin, Italy  
%Springer Heidelberg, Tiergartenstr. 17, 69121 Heidelberg, Germany
%\email{lncs@springer.com}\\
%\url{http://www.springer.com/gp/computer-science/lncs} \and
%ABC Institute, Rupert-Karls-University Heidelberg, Heidelberg, Germany\\
\email{\{name,surname\}@unito.it}
\and
University of Camerino, Italy 
%Springer Heidelberg, Tiergartenstr. 17, 69121 Heidelberg, Germany
%\email{lncs@springer.com}\\
%\url{http://www.springer.com/gp/computer-science/lncs} \and
%ABC Institute, Rupert-Karls-University Heidelberg, Heidelberg, Germany\\
\email{luca.padovani@unicam.it}}

% \institute{Princeton University, Princeton NJ 08544, USA \and
% Springer Heidelberg, Tiergartenstr. 17, 69121 Heidelberg, Germany
% \email{lncs@springer.com}\\
% \url{http://www.springer.com/gp/computer-science/lncs} \and
% ABC Institute, Rupert-Karls-University Heidelberg, Heidelberg, Germany\\
% \email{\{abc,lncs\}@uni-heidelberg.de}}
%
\maketitle              % typeset the header of the contribution
\begin{abstract}
  In the ever-evolving landscape of scientific computing, properly supporting the modularity and complexity of modern scientific applications requires new approaches to workflow execution, like seamless interoperability between different workflow systems, distributed-by-design workflow models, and automatic optimisation of data movements. In order to address this need, this article introduces SWIRL, an intermediate representation language for scientific workflows. In contrast with other product-agnostic workflow languages, SWIRL is not designed for human interaction but to serve as a low-level compilation target for distributed workflow execution plans. 
  The main advantages of SWIRL semantics are low-level primitives based on the send/receive programming model and a formal framework ensuring the consistency of the semantics and the specification of translating workflow models represented by Directed Acyclic Graphs (DAGs) into SWIRL workflow descriptions. Additionally, SWIRL offers rewriting rules designed to optimise execution traces, accompanied by corresponding equivalence. 
  An open-source SWIRL compiler toolchain has been developed using the ANTLR Python3 bindings.
  
  \keywords{Hybrid workflow  \and Interoperability \and Formal methods}
\end{abstract}

\newenvironment{thref}[1]{%
	\vspace*{0.3cm}
	\noindent {\bf  Theorem #1.}}

\newenvironment{lmref}[1]{%
	\vspace*{0.3cm}
	\noindent {\bf  Lemma #1.}}

%Comments
\newcommand{\dori}[1]{\textcolor{orange}{Dori: #1}}

% Commands

\newcommand{\dom}{\mathrm{dom}}
\newcommand{\domplus}{\mathrm{dom}^{+}}
\newcommand{\In}{\mathit{In}}
\newcommand{\Out}{\mathit{Out}}
\newcommand{\Transfer}{\mathcal{T}}
\newcommand{\Neigh}{\mathcal{N}}
\newcommand{\Ports}{\mathcal{P}}
\newcommand{\Data}[1]{D_{#1}}
\newcommand{\mapping}{\mathcal{M}}
\newcommand{\dip}{\mathcal{D}}
\newcommand{\parop}{\mathbin{|}}
\newcommand{\Map}{\mathit{Map}}
\newcommand{\Scatter}{\mathit{Scatter}}
\newcommand{\Gather}{\mathit{Gather}}
\newcommand{\driver}{\mathit{driver}}
\newcommand{\config}[2]{\left\langle #1, #2\right\rangle}
\newcommand{\lconfig}[3]{\left\langle #1, #2, #3\right\rangle}
\newcommand{\sconfig}[4]{\left\langle #1, #2, #3, #4\right\rangle}
\newcommand{\ex}{e}
\newcommand{\target}[1]{\left\langle #1\right\rangle}
\newcommand{\activestate}{A}
\newcommand{\inactivestate}{I}
\newcommand{\anystate}{St}
\newcommand{\commpath}[2]{\mathcal{C}(#1, #2)}
\newcommand{\deppath}[2]{\mathcal{D}\left(#1, #2\right)}
\newcommand{\lts}[1]{\xrightarrow{\; #1 \;}}
\newcommand{\dep}{\mathtt{deploy}}
\newcommand{\udep}{\mathtt{undeploy}}
\newcommand{\nul}{\mathbf{0}}
\newcommand{\drop}[1]{\mathtt{drop}(#1)}
\newcommand{\exec}[1]{\mathtt{exec}(#1)}
\newcommand{\send}[1]{\mathtt{send}(#1)}
\newcommand{\move}[1]{\mathtt{move}(#1)}
\newcommand{\trans}[1]{\mathtt{transfer}(#1)}
\newcommand{\flow}[1]{\xmapsto[]{#1}}
\newcommand{\recv}[1]{\mathtt{recv}(#1)}
\newcommand{\step}{\mathtt{s}}
\newcommand{\tr}{\mathtt{t}}
\newcommand{\para}{\;|\;}
\newcommand{\comm}[1]{\textcolor{blue}{#1}}
\newcommand{\commred}[1]{\textcolor{red}{#1}}
\newcommand{\conf}{\mathtt{C}}
\newcommand{\work}{\mathtt{W}}
\newcommand{\system}{\mathtt{M}}
\newcommand{\hwork}{\mathtt{H}}
\newcommand{\rel}{\mathcal{R}}
\newcommand{\eq}{\sim}
\newcommand{\causal}{\sqsubseteq}
\newcommand{\conc}{\smile}
\newcommand{\loc}{\mathtt{loc}}
\newcommand{\enc}[1]{\llbracket{#1}\rrbracket}
\newcommand{\encod}[1]{\llparenthesis{#1}\rrparenthesis}
\newcommand{\pathset}{{\bf{C}}}
\newcommand{\conflict}{\bowtie}
\newcommand{\messa}[2]{{#1}\rightarrowtail {#2}}
\newcommand{\congr}{\equiv}
\newcommand{\load}{Q}
\newcommand{\inout}[2]{{\if#1\emptyset\emptyset\else\{#1\}\fi}\mapsto {\if#2\emptyset\emptyset\else\{#2\}\fi}}
\newcommand{\localopt}[1]{\llbracket {#1}\rrbracket}
\newcommand{\traceopt}[1]{\lbag {#1}\rbag}
\newcommand{\execb}[1]{{\bf{\mathtt{\bf{exec}}\bf{(#1)}}}}
\newcommand{\sendb}[1]{{\bf{\mathtt{\bf{send}}\bf{(#1)}}}}
\newcommand{\recvb}[1]{{\bf{\mathtt{\bf{recv}}\bf{(#1)}}}}
\newcommand{\relabel}[1]{[{#1}]}
\newcommand{\barb}[1]{\downarrow_{#1}}
\newcommand{\wbarb}[1]{\Downarrow_{#1}}
\newcommand{\bisim}{\approx}
\newcommand{\opt}{\mathtt{O}}
\newcommand{\weak}{\Rightarrow}
\newcommand{\prefset}{A}
\newcommand{\optset}{\mathcal{O}}
\newcommand{\signal}{\mathtt{0}}
\newcommand{\instance}{\mathtt{I}}
\newcommand{\mem}{M}
\newcommand{\mapdati}{\mathcal{I}}
\newcommand{\operator}{\mathtt{o}}

\newcommand{\inoutset}[2]{{#1}\mapsto {#2}}
\newcommand{\datadis}{G}
\newcommand{\eps}{\epsilon}
\newcommand{\marking}{\Phi}
\newcommand{\datalocation}{A}
\newcommand{\doriana}[1]{\textcolor{red}{#1}}

\begin{section}{Introduction}
  Workflows have been widely used to model large-scale scientific workloads. The explicit definition of true dependencies between subsequent steps allows inferring concurrent execution strategies automatically, improving performances, and transferring input and output data wherever needed, fostering large-scale distributed executions. However, current Workflow Management Systems (WMSs) struggle to keep up with the ever-more demanding requirements of modern scientific applications, such as \emph{interoperability} between different systems, \emph{distributed-by-design} workflow models, and \emph{automatic optimisation of data movements}.

  With the advent of BigData, adopting a proper \emph{data management} strategy has become a crucial aspect of large-scale workflow orchestration. Avoiding unnecessary data movements and coalescing data transfers are two established techniques for performance optimisation in distributed executions. Moving computation near data to remove the need for data transfers is the underlying principle of several modern approaches to large-scale executions, like Resilient Distributed Datasets \cite{Zaharia:2012} and in-situ workflows \cite{Ayachit:16}. 
  
  WMSs' \emph{interoperability} is an open problem in scientific workflows, which hinders reusability and composability. Despite several attempts to model product-agnostic workflow languages \cite{CWL:2022} and representations \cite{Plankensteiner:2011} present in the literature, these solutions capture only a subset of features, forcing WMSs to reduce their expressive power in the name of portability. The main issue in unifying workflow representations resides in the heterogeneity of different WMSs' APIs and programming models tailored to the needs of a domain experts. Conversely, moving the interoperability efforts to the lower level of the workflow execution plan representation is a promising but still relatively unexplored alternative.
  
  The \emph{heterogeneity} in contemporary hardware resources and their features, further exacerbated by the end-to-end co-design approach \cite{Reed:22}, requires WMSs to support a large ecosystem of execution environments (from HPC to cloud, to the Edge), optimisation policies (performance vs. energy efficiency) and computational models (from classical to quantum). However, maintaining optimised executors for such diverse execution targets is an overarching effort. In this setting, a just-in-time compilation of target-specific execution bundles, optimised for a single workflow running in a single execution environment, would be a game-changing approach. Indeed, this approach allows for the efficient use of resources, as the compilation is done at the time of execution, taking into account the specific characteristics of the execution environment. It also ensures the effectiveness of the execution, as the compiled bundle is optimised for the specific workflow, leading to improved performance.
  
  This work presents SWIRL, a ``Scientific Workflow Intermediate Representation Language''. Unlike other product-agnostic workflow languages, SWIRL is not intended for human interaction but serves as a low-level compilation target for distributed workflow execution plans. It models the execution plan of a location-aware workflow graph as a distributed system with send/receive communication primitives. This work provides a formal method to encode a workflow instance into a distributed execution plan using these primitives, promoting interoperability and composability of different workflow models. It also includes a set of rewriting rules for automatic optimisation of data communications with correctness and consistency guarantees. The optimised SWIRL representation can then be compiled into one or more self-contained executable bundles, making it adaptable to specific execution environments and embracing heterogeneity. 

  The SWIRL implementation follows the same line as the theoretical approach, separating scientific workflows' design and runtime phases. A SWIRL-based compiler translates a workflow system $\work$ to a high-performance, self-contained workflow execution bundle based on send/receive communication protocols and runtime libraries, which can easily be included in a Research Object \cite{Bechhofer13}, significantly improving reproducibility.

  In detail, Sec.~\ref{sc:background} introduces a generic formalism for representing distributed scientific workflow models, while the related work and the comparison with the SWIRL language is given in Sec.~\ref{sc:related-work}. Sec.~\ref{sec:methodology} introduces the SWIRL semantics, and Sec.~\ref{sec:optimisation} derives the rewriting rules used for optimisation. Sec.~\ref{sc:implementation} describes the implementation of the SWIRL compiler toolchain while Sec.~\ref{sec:evaluation} shows how to model the 1000 Genomes workflow \cite{Silva:19}, a Bioinformatics pipeline aiming at fetching, parsing and analysing data from the 1000 Genomes Project \cite{1000Genomes:15} into SWIRL system. 
  Finally, Sec.~\ref{sc:conclusion} concludes the article. Full proofs and additional material can be found in Appendix~\ref{app:a} while the experiment is in~\cite{zenodo-swirl}.
\end{section}

\begin{section}{Background and related work}\label{sc:background}
  This section gathers the related work (Sec.~\ref{sc:related-work}) and introduces a formal representation of scientific workflows (Sec.~\ref{subsec:workflows}) and their mapping onto distributed and heterogeneous execution environments (Sec.~\ref{subsec:distributed-workflows}). 

  \begin{subsection}{Related work} \label{sc:related-work}
    \emph{Location-aware WMSs.}
    Grid-native WMSs typically support distributed workflows out of the box, providing automatic scheduling and data transfer management across multiple execution locations. However, all the orchestration aspects are delegated to external, grid-specific technologies, limiting the spectrum of supported execution environments. For instance, Triana \cite{Taylor:2007}, Askalon \cite{Askalon:2007} and Pegasus \cite{pegasus:2019} delegate tasks offloading and data transfers to the GAP interface \cite{Taylor:2003}, the GLARE library \cite{Siddiqui:2005}, and HTCondor \cite{HTCondor:2005}, respectively. 
    
    Recently, a new class of location-aware WMSs is bringing advantages in performance and costs of workflow executions on top of heterogeneous distributed environments. StreamFlow \cite{20:streamflow:tetc} allows users to explicitly map each step onto one or more locations in charge of its execution. It relies on a set of connectors to support several execution environments, from HPC queue managers to microservices orchestrators. Jupyter Workflow \cite{21:FGCS:jupyflow} transforms a sequential computational notebook into a distributed workflow by mapping each cell into one or more execution locations, semi-automatically extracting inter-cell data dependencies from the code, and delegating the runtime orchestration to StreamFlow. Mashup \cite{mashup:22} automatically maps each workflow step onto the best-suited location, choosing between traditional Cloud VMs and serverless platforms.

    Each tool has its own strategy to derive an execution plan from a workflow graph without relying on an explicit and consolidated intermediate representation. Moreover, none of them formalise this derivation process, hiding its details inside the WMS's codebase. Instead, relying on a common intermediate language like SWIRL would allow interoperability between different tools and formal correctness guarantees on the adopted optimisation strategies.  

    \emph{Formal models for distributed workflows}
    In the literature, the number of different WMSs is notable ~\cite{numberwms}, however, up to our knowledge, there are only a few WMS for which formal models have been developed: Taverna~\cite{TuriMGRO07}, employing the lambda calculus~\cite{Moggi89} to define the workflow language in functional terms; Kepler~\cite{Kepler:2006} adopting Process Networks~\cite{KahnM77} and BPEL~\cite{OuyangVABDH07}, where the workflow language is formalised with Petri Nets. YAWL~\cite{AalstH05} is another workflow language based on Petri Nets extended with constructs to address the multiple instances, advanced synchronisation, and cancellation patterns. It provides a detailed representation of workflow patterns~\cite{Aalst:2003} supported by an open-source environment.

    Process algebra, in particular, different versions of $\pi$-calculus~\cite{picalculus} are suited to model the workflow system due to the ability of processes to change their structure dynamically. A class of workflow patterns has been precisely defined using the execution semantics of $\pi$-calculus, in~\cite{PuhlmannW05}, while the basic control flow constructs modelled by $\pi$-calculus are given in~\cite{workflowpimed}.
    A distributed extension of $\pi$-calculus~\cite{Hennessy:2007} is examined as a formalisation for distributed workflow systems in~\cite{MedicA23}, providing a discussion on the flexibility of the proposed representation. Aside from $\pi$-calculus, CCS (Calculus of Communication Systems)~\cite{ccs} models Web Service Choreography Interface descriptions. 
  \end{subsection}

  \begin{subsection}{Scientific workflow models}\label{subsec:workflows}
    A generic workflow can be represented as a \emph{directed bipartite graph}, where the nodes refer to either the computational \emph{steps} of a modular application or the \emph{ports} through which they communicate, and the edges encode \emph{dependency relations} between steps.
    
    \begin{definition}\label{def:workflow}
      A workflow is a directed bipartite graph $W = (S, P, \dip)$ where $S$ is the set of steps, $P$ is the set of ports, and $\dip \subseteq (S \times P) \cup (P \times S)$ is the set of dependency links.
    \end{definition}

    In the considered graph, one port can have multiple output edges meaning that more steps are dependent on it. The sets of input/output ports (steps) of a step (port) are defined with the following definition. 

    \begin{definition}
      Given a workflow $W=(S,P,\dip)$,a step $s\in S$ and a port $p\in P$,
      the sets of input and output ports of $s$ are denoted with $\In(s)$ and $\Out(s)$, respectively, and defined as:
      {\small
        \begin{equation*}
          \In(s)=\{p'\para (p',s)\in \dip\} \qquad \Out(s)=\{p'\para (s,p')\in \dip\}
        \end{equation*}
      }
      while the sets of input and output steps of $p$ are denoted with $\In(p)$ and $\Out(p)$, respectively, and defined as:
      {\small
        \begin{equation*}
          \In(p)=\{s'\para (s',p)\in \dip\} \qquad \Out(p)=\{s'\para (p,s')\in \dip\}
        \end{equation*}
      }
    \end{definition}

    Traditionally, scientific workflows are modelled using a dataflow approach, i.e., 
    following \emph{token-pushing} semantics in which tokens carry \emph{data values}.
    The step executions are enabled by the presence of tokens in their input ports and produce new tokens in their output ports. In general, a single workflow model can generate infinite \emph{workflow instances}. Different instances preserve the same graph structure but differ in the values carried by each token.
    
    \begin{definition}\label{def:datainstance}
      A workflow instance is a tuple $(W, \Data{}, \mapdati)$ where $W=(S,P,\dip)$ is a workflow, $\Data{}$ is a set of data elements, and $\mapdati\subseteq (\Data{}\times P)$ is a mapping relation connecting each data element $d \in \Data{}$ to the port $p \in P$ that contains it.
    \end{definition}

    \begin{definition}
      Given a workflow instance $(W,\Data{},\mapdati)$, where $W=(S,P,\dip)$, and a step $s \in {S}$, the sets of input and output data elements of $s$ are denoted with $\In^{D}(s)$ and $\Out^{D}(s)$, respectively, and defined as:
      {\small
      \begin{align*}
        \In^{D}(s) =\{d\para (d,p)\in \mapdati \;\wedge\; p\in\In(s)\}\qquad
        \Out^{D}(s) =\{d\para (d,p)\in \mapdati \;\wedge\; p\in\Out(s)\}
      \end{align*}}
    \end{definition}

    Introducing more precise evaluation semantics, triggering strategies, or limitations on the dependencies structure can specialise this general definition to an actual workflow model (e.g., a Petri Net \cite{Reisig:1998} or Coloured Petri Nets \cite{Jensen:1989}, a Kahn Processing Network \cite{Kahn:74}, or a Synchronous Dataflow Graph \cite{Lee:87}).  
  \end{subsection}
 
  \begin{subsection}{Distributed workflow models} \label{subsec:distributed-workflows}
    A \emph{distributed workflow} is a workflow whose steps can target different \emph{deployment locations} in charge of executing them. To compute the step, the corresponding location must have access to or store all the input data elements, additionally, it will store all the output data elements on its local scope. Locations can be \emph{heterogeneous}, exposing different hardware devices, software libraries, and security levels. Consequently, the steps are explicitly mapped onto execution locations depending on their computing requests.
    Given that, a distributed workflow model must contain a specification of the workflow structure, the set of available locations, and a mapping relation between steps and locations.
  
    \begin{definition}\label{def:distributed-workflow}
      A distributed workflow is a tuple $\left(W, L, \mapping\right)$, where $W = \left(S, P, \dip\right)$ is a workflow, $L$ is the set of available locations, and $\mapping\subseteq (S \times L)$ is a mapping relation stating which locations are in charge of executing each workflow step.
    \end{definition}

    Each location can execute multiple steps on it, and a single step can be mapped onto multiple locations. 
    Multiple steps related to a single location introduce a \emph{temporal constraint}: all the involved steps compete to acquire the location's resources. They can be serialised if the location does not have enough resources to execute all of them concurrently. Conversely, multiple locations related to a single step express a \emph{spatial constraint}: all involved locations must collaborate to execute the step. 
    This work does not impose any particular strategy for scheduling different step executions on a single location when temporal constraints arise. However, it is helpful to know the \textit{work queue} of a given location $l$, i.e., the set of steps mapped onto it.
    \begin{definition}
      Given a distributed workflow $\left(W, L, \mapping\right)$, where $W = \left(S, P, \dip\right)$, and a location $l\in L$, the set of steps mapped onto $l$ is called the work queue of $l$, denoted as $\load(l)$ and defined as:
        $\load(l) = \{s\para l \in \mapping(s)\} $.
    \end{definition}

    Similarly to what was discussed in Sec.~\ref{subsec:workflows}, a single distributed workflow model can generate potentially infinite \emph{distributed workflow instances} with different data elements and condition evaluations.
    
    \begin{definition}\label{def:distributeddatainstance}
      A distributed workflow instance is a tuple $\instance=\left(W, L, \mapping, \Data{}, \mapdati\right)$ where $(W,L,\mapping)$ is a distributed workflow, $D$ is a set of data elements, and $\mapdati\subseteq (\Data{}\times P)$ is a mapping relation connecting each data element $d \in \Data{}$ to the port $p \in P$ that contains it.
    \end{definition}

    \begin{example} \label{ex:distributed-workflow}
      Fig.~\ref{fig:distributed-workflow-example} shows an example of a distributed workflow model. A step $s_{1}$ produces two different output data elements $d_1$ and $d_2$, which are mapped to ports $p_{1}$ and $p_{2}$. The second and the third step, $s_{2}$ and $s_{3}$ depend on the data elements on the ports $p_{1}$ and $p_{2}$, respectively. None of them produces other outputs. This workflow is mapped onto four locations. Step $s_{1}$ is executed on location $l_{d}$, while $s_{2}$ is offloaded to $l_{1}$ and step $s_{3}$ is mapped to two locations $l_{2}$ and $l_{3}$. Using definitions above, Fig.~\ref{fig:distributed-workflow-example} can be written as follows:
      {\small
      \begin{equation*}\label{eq:distributed-workflow-example}
        \begin{aligned}
          W &= \left(
                  \left\{s_{1}, s_{2}, s_{3}\right\}\!,
                  \left\{p_{1}, p_{2}\right\}\!,
                  \left\{(s_{1}, p_{1}), (s_{1}, p_{2}), (p_{1}, s_{2}), (p_{2}, s_{3})\right\}
                \right)\\
          L &= \left\{l_{d}, l_{1}, l_{2}, l_{3}\right\} \qquad\qquad\qquad
          \mapping = \left\{\left(s_{1}, l_{d}\right)\!, \left(s_{2}, l_{1}\right)\!, \left(s_{3}, l_{2}\right)\!, \left(s_{3}, l_{3}\right)\right\}\\
          \Data{}& = \left\{d_{1}, d_{2}\right\} \qquad\qquad\;\;\qquad\qquad
        \mapdati = \left\{\left(d_{1}, p_{1}\right)\!, \left(d_{2}, p_{2}\right)\right\}
        \end{aligned}
      \end{equation*}}
    \end{example}

    \begin{SCfigure}[][t]
      \centering
      \vspace*{-2mm}
      \includegraphics[scale=0.8]{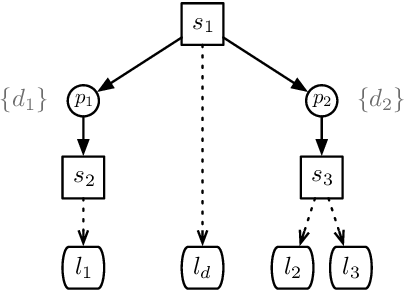}%
      \caption{\small{Example of a distributed workflow model. Steps are represented as squares and ports as circles. Dependency links between steps and ports are depicted as arrows with black-filled heads. Locations are represented as squashed rectangles. Mapping relations are expressed as dotted arrows. A potential instance of this model can be derived by adding data elements, denoted as sets of values, near their related port.}}
      \label{fig:distributed-workflow-example}
    \end{SCfigure}
  \end{subsection}
\end{section}

\begin{section}{The SWIRL representation} \label{sec:methodology}
  This section introduces SWIRL, a ``Scientific Workflow Intermediate Representation Language''. Given a distributed workflow instance (Sec.~\ref{subsec:distributed-workflows}), SWIRL can model a decentralised execution plan, called \emph{workflow system}, by inferring and projecting execution traces on each involved location and specifying inter-location communications using send/receive primitives. 
  The following sections introduce SWIRL syntax and semantics and derive a procedure to formally encode a workflow instance $\instance$ into a SWIRL workflow system $\work$.
  
  SWIRL models a distributed execution plan as a workflow system $\work$, which can be seen as a parallel composition of \emph{location configurations}, tuples $\lconfig{l}{\Data{}}{\ex}$, containing the location name $l$, the set $\Data{}$ of data elements laying on $l$ at a given time, and the execution trace $\ex$ representing the actions to be executed on $l$.
    
  \begin{definition} \label{def:swirl-syntax}
    The syntax of a workflow system $\work$ is defined by the following grammar: 
    \vspace{-2mm}
    {\small
    \begin{align*}
      \work &::= \lconfig{l}{\Data{}}{\ex}\parallel  (\work_1 \para \work_2)\\
      \ex &::= \mu \parallel \ex_1.\ex_2 \parallel (\ex_1\para\ex_2)\parallel \nul\\
      \mu &::=  \exec{s,F(s),\mapping(s)} \parallel \send{\messa{d}{p},l,l'}\parallel
          \recv{p,l,l'}\\
      F(s) &::= \In^{D}(s)\mapsto \Out^{D}(s) 
    \end{align*}}
  \end{definition}
  Each execution trace $\ex$ is constructed from the predicates $\mu$, which can be composed using two operators: the \emph{sequential execution} $\ex_{1}.\ex_{2}$ and the \emph{parallel composition} $\ex_1\para\ex_2$. The $\nul$ symbol represents the empty trace.

  A predicate $\mu$ represents an action to be performed during workflow execution. Predicates $\send{\messa{d}{p},l,l'}$ and $ \recv{p,l,l'}$ allow transferring the data element $d$ over port $p$ from location $l$ to location $l'$. 
  Modelling ports and data separately seams redundant, but we prefer to keep them divided for the future extensions of the framework, as adding the loops.
  The $\exec{s,F(s),\mapping(s)}$ action represents the execution of step $s$. Besides the name of the step, this predicate contains the set $\mapping(s)$ of locations onto which $s$ is mapped and the \emph{dataflow} $F(s)$, i.e., the set $\In^{D}(s)$ of input data needed by $s$ and the set $\Out^{D}(s)$ of output data produced on each $l \in \mapping(s)$ after the execution of $s$.

  \begin{subsection}{Semantics}\label{subsec:swirl-semantics}
    \begin{figure}[t]
      {\scriptsize
        \begin{mathpar}
          \inferrule*[left={\scriptsize{(\textsc{Id$_{\para}$})}}\quad]
          {}
          {\ex\para\nul \congr \ex}
          \and
          \inferrule*[left={\scriptsize{(\textsc{Id$_{.}$})}}\quad]
          {}
          {\nul.\ex \congr \ex\,\wedge\,\ex.\nul \congr \ex}
          \and
            \inferrule*[left={\scriptsize{(\textsc{Comt$_{u}$})}}\quad]
          {}
          {u\para u' \congr u'\para u},\; u\in {e,\work}
           \vspace*{-2mm}
        \end{mathpar}}
        \vspace*{-2mm}
      \caption{{\small{SWIRL structural congruence rules.}}}
      \label{fig:swirl-congruence-rules}
      \vspace{-2mm}
    \end{figure}

    The SWIRL semantics is defined in terms of a \emph{reduction semantics}.

    \begin{definition}
      The SWIRL semantics is defined by the reduction relation $\lts{\;}$ defined as a smallest relation closed under the rules of Figs.~\ref{fig:swirl-congruence-rules} and~\ref{fig:swirl-semantics}.
    \end{definition}

    The structural congruence properties are reported in Fig.~\ref{fig:swirl-congruence-rules}. The commutativity of the parallel composition in location and the execution trace level is defined with rule \textsc{(Comt$_{u}$)}. For both operators, parallel composition and sequential execution, the identity element is $\nul $ (rules \textsc{(Id$_{\para}$)} and \textsc{(Id$_{.}$)}).
    \begin{figure*}[t]
      {\scriptsize
        \begin{mathpar}
          \inferrule*[left={\scriptsize{(\textsc{Exec})}}\quad]
          {\forall l_i \in \mapping(s) \quad\wedge\quad \In^{D}(s)\subseteq \Data{i}}
          {\prod_{i, l_i \in\mapping(s)}\lconfig{l_i}{\Data{i} }{\exec{s,F(s),\mapping(s)}.\ex_i}\lts{} 
          \prod_{i, l_i \in\mapping(s)}\lconfig{l_i}{\Data{i} \cup \Out^{D}(s) }{\ex_i} }
          \and
          %%%%%%%%%%%%%%%%%%%%%%%%%
          \inferrule*[left={\scriptsize{(\textsc{L-Comm})}}\quad]
          {d\in \Data{}}
          {\lconfig{l}{\Data{}}{\send{\messa{d}{p},l,l }.\ex\para \recv{p,l,l }.\ex'} \lts{}
          \lconfig{l}{\Data{}}{\ex\para \ex'} 
          }
          \and
          %%%%%%%%%%%%%%%%%%%%%%%%%%%%%
          %%%%%%%%%%%%%%%%%%%%%%%%%%%%%%%%%
          \inferrule*[left={\scriptsize{(\textsc{Comm})}}\quad]
          {d\in \Data{}}
          {\lconfig{l}{\Data{}}{\send{\messa{d}{p},l,l' }.\ex} \para  \lconfig{l'}{\Data{}'}{\recv{p,l,l' }.\ex'} \lts{}
          \lconfig{l}{\Data{}}{\ex}\para \lconfig{l'}{\Data{}'\cup \{d\}}{\ex'}
          }
          \and
          %%%%%%%%%%%%%%%%%%%%%%%%%%%%%%%%%
          \inferrule*[left={\scriptsize{(\textsc{L-Par})}}\quad]
          {\lconfig{l}{\Data{}}{\ex_1 }\lts{}\lconfig{l}{D'_{}}{\ex'_1 }}
          {\lconfig{l}{\Data{}}{\ex_1 \para\ex_2}\lts{}\lconfig{l}{D'_{}}{\ex'_1 \para\ex_2}}
          \and
          %%%%%%%%%%%%%%%%%%%%%%%%%%%%
          \inferrule*[left={\scriptsize{(\textsc{Sec})}}\quad]
          {\lconfig{l}{\Data{}}{\ex_1 }\lts{}\lconfig{l}{D'_{}}{\ex'_1 }}
          {\lconfig{l}{\Data{}}{\ex_1.\ex_2}\lts{}\lconfig{l}{D'_{}}{\ex'_1.\ex_2}}
          \and
          %%%%%%%%%%%%%%%%%%%%%%%%%
          \inferrule*[left={\scriptsize{(\textsc{Par})}}\quad]
          {\work_1\lts{}\work'_1 }
          {\work_1\para \work_2 \lts{}\work'_1  \para \work_2}
          \and 
          %%%%%%%%%%%%%%%%%%%%%%%%%
          \inferrule*[left={\scriptsize{(\textsc{Congr})}}\quad]
          {\work'_1\congr\work_1 \lts{}\work_2 \congr \work'_2}
          {\work'_1\lts{}\work'_2 }
        \end{mathpar}
      }
      \caption{{\small{SWIRL reduction semantics rules.}}}
      \label{fig:swirl-semantics}
      \vspace{-2mm}
    \end{figure*}

    The rules of a SWIRL semantics are depicted in Fig.~\ref{fig:swirl-semantics}. 
    The step execution is performed by the (\textsc{Exec}) rule. It collects all the locations $\mapping(s)$ onto which step $s$ is mapped and synchronises the execution action. The data $\Out^{D}(s)$ produced by the step execution are added to the set $D_{i}$ in all executing locations. Rule (\textsc{L-Comm}) describes local communication, while rule (\textsc{Comm}) represents a data transfer between two locations. In the latter case, the involved data element is \emph{copied} to the targeted (receiving) location. Note that communications do not consume the data element on the sending location.
  
    Assuming that configuration $\lconfig{l}{\Data{}}{\ex_1}$ can be computed, rules (\textsc{L-Par}) and (\textsc{Seq}) allow for the execution of the parallel and the sequential composition inside the same location, respectively.
    The execution of the workflow sub-system as a part of a larger system is allowed by the rule (\textsc{Par}). The (\textsc{Congr}) rule allows the application of structural congruence. 

    When a step $s$ is mapped onto multiple locations, each of them must contain an $\mathtt{exec}$ predicate with the set of involved locations. Such predicates introduce \emph{synchronisation points} among different locations, as all involved execution traces must step forward in a single pass. Additionally, each location must have a copy of the input data $\In^{D}(s)$, requiring multiple $\mathtt{send}$ operations for each element $d \in \In^{D}(s)$, and will own a copy of $\Out^{D}(s)$.
    
    \begin{example} \label{ex:distributed-workflow-instance-syntax}
      The behaviour of the distributed workflow instance given in Fig.~\ref{fig:distributed-workflow-example} can be modelled as a workflow system $\work$ with the following syntax:
      {\small
      \vspace{-2mm}
      \begin{align*}
        &\qquad\qquad \work = \lconfig{l_{d}}{\emptyset}{\ex_{d}}\para \prod^{3}_{i=1} \lconfig{l_{i}}{\emptyset}{\ex_{i}}\\[2mm]
        \ex_{d} &=
          \begin{aligned}[t]
            &\exec{s_1,\inout{\emptyset}{d_1,d_2},\{l_{d}\}}.\big(\send{\messa{d_1}{p_1},l_{d},l_1} \para\\
            &\send{\messa{d_2}{p_2},l_{d},l_2}\para \send{\messa{d_2}{p_2},l_{d},l_3}\big)
          \end{aligned}\\
        \ex_{1} &= \recv{p_1,l_{d},l_1}.\exec{s_2,\inout{d_1}{\emptyset},\{l_1\}}\\
        \ex_{2} &= \recv{p_2,l_{d},l_2}.\exec{s_3,\inout{d_2}{\emptyset},\{l_2,l_3\}}\\
        \ex_{3} &= \recv{p_2,l_{d},l_3}.\exec{s_3,\inout{d_2}{\emptyset},\{l_2,l_3\}}
      \end{align*}}
      In the execution trace $\ex_{d}$, step $s_1$ is sending output data $d_{2}$ to both locations $l_2, l_3 \in \mapping(s_{3})$ through the same port $p_2$.
    \end{example}
  \end{subsection}

  \begin{subsection}{Workflow model encoding}\label{subsec:encoding}
    Example~\ref{ex:distributed-workflow-instance-syntax} describes the encoding of a distributed workflow instance $\instance$ into a workflow system $\work$. This section introduces a formal methodology to perform this encoding automatically for any distributed workflow instance.

    In SWIRL, the execution trace $\ex_{l}$ of a location $l \in L$ models the actions required to execute all the steps in its work queue $\load(l)$. In this respect, $\ex_{l}$ can be seen as the parallel composition of \emph{building blocks} $B_{l}(s)$, one for each $s \in \load(l)$. Each building block $B_{l}(s)$ contains the same sequence of actions: 
    $(i)$ receives all the necessary data elements for the step execution in which case it is necessary to determine all input data elements ($\In^{D}(s)$) and for each element to identify the step producing it ($\In(\mapdati (d_i))$) and the locations on which the steps are mapped to ($\mapping(\In(\mapdati (d_i)))$)
    $(ii)$ executes the step $s$; 
    $(iii)$ sends the produced data elements ($\Out^{D}(s)$) to the locations onto which the receiving steps are mapped (one data element can be sent, over the same port, to the different steps/locations, therefore it is necessary to identify the steps data $d_i$ is sent to with $ \Out(\mapdati (d_i))$ and the locations $l_j$ on which each step is deployed).
      
    \begin{definition}\label{def:buildingblock}
      Given a distributed workflow instance $\instance=\left(W, L, \mapping, \Data{}, \mapdati\right)$, a deployment location $l\in L$  
      and a step $s \in S$ s.t. $l\in\mapping(s)$, the building block representing $s$ in $\ex_{l}$ is denoted by $B_{l}(s)$ and defined as:
      \vspace{-1mm}
      {\footnotesize
      \begin{align*}
      B_{l}(s)= & \left ( \prod^{\forall d_i \in \In^{D}(s)} \prod^{\forall l_j \in \mapping(\In(\mapdati (d_i)))}\recv{\mapdati (d_i), l_j,l}\right ).\\[1mm]
      & \exec{s, \inoutset{\In^{D}(s)}{\Out^{D}(s)},\mapping(s)}.\\[1mm]
      &\left ( \prod^{\forall d_i \in \Out^{D}(s)} \prod^{\forall s_k \in \Out(\mapdati (d_i))}  \prod^{\forall l_j \in \mapping(s_k)}    \send{\messa{d_{i}}{\mapdati (d_i)}, l, l_j }\right )
      \end{align*}
      }

    \end{definition}
    
    Def.~\ref{def:buildingblock} introduces the general form of $B_{l}(s)$, which holds for steps connected to both input and output ports. If a step does not consume input data, as in the case of step $s_{1}$ from Example~\ref{ex:distributed-workflow-instance-syntax}, the receiving part of $B_{l}(s)$ is modelled with $\nul$, resulting in $B_{l_{d}}(s_{1})=\nul.\exec{s,\inout{\emptyset}{d_1},\{l_{d}\}}.\send{\messa{d_1}{p_1},l_{d},l_{1}}$. The same applies to steps that do not produce output data.

    The concept of building blocks $B_{l}(s)$ allows for the modular construction of execution traces by processing one pair $(s, l)$ at a time. Intuitively, for each mapping pair step-location $(s,l)$, corresponding  building blocks $B_{l}(s)$ are made and added to the execution trace of the location $l$. Another important information is the \emph{instance data distribution} on the locations, denoted by $\datadis (l) = \{d | d \in \Data{l}\}$. 
    
    \begin{definition} \label{def:graphenc}
      The encoding function $\enc{\cdot}:  \mathcal{W}_{\instance} \lts{} \mathcal{W}_{\work}$, where $\mathcal{W}_{\instance}$ and $\mathcal{W}_{\work}$ are the sets of distributed workflow instances and workflow systems represented in SWIRL, respectively, is inductively defined as follows:
      {\footnotesize
      \vspace{1mm}
      \begin{align*}
        & \enc{\instance} = \enc{\instance,\mapping,\datadis; \work}\quad\text{where }\work =\prod_{\forall l_i\in L} \lconfig{l_i}{\emptyset}{\ex_l}\\
        &\enc{\instance,\mapping\cup (s,l),\datadis;\work\para \lconfig{l}{\emptyset}{\ex_l}} =
        \enc{\instance,\mapping,\datadis;\work\para \lconfig{l}{\emptyset}{\ex_l\para B_{l}(s)}}\\
        &\enc{\instance,\mapping,\datadis\cup \datadis(l);\work\para \lconfig{l}{\emptyset}{\ex_l}} =
        \enc{\instance,\mapping,\datadis;\work\para \lconfig{l}{\datadis(l)}{\ex_l}} \\
        &\enc{\instance,\emptyset,\emptyset;\work} =\work 
          \vspace*{-2mm}
      \end{align*}}
    \end{definition} 

    Formally, the encoding operator can be defined as a function with four input parameters: $(i)$ a workflow instance $\instance$ to be translated; $(ii)$ the set of pairs $(s,l)$ containing all mappings in $\mapping$; $(iii)$ the set $\datadis$ representing the distribution of the data over locations; $(iv)$ placeholder to build the workflow system $\work$. The translation starts by adding the auxiliary parameters into the encoding process and inside a SWIRL placeholder, creating a workflow containing locations configurations for each location in the workflow instance $\instance$ (for all $l\in L$). 
    The iteration process is divided in two phases, first at each iteration, the encoding function takes the pair $(s,l)$, identify the location $l$ and add the building block $B_{l}(s) $ into the execution trace to be executed on the location $l$. When all pairs are encoded, in the second phase, the iteration is on the distribution of the data over locations. Each set $\datadis(l)\subseteq \datadis$ is encoded to the corresponding locations.
    In that way, the trace $\ex_{l}$ is the parallel composition of building blocks $B_{l}(s)$ for each $s \in Q(l)$. The encoding finishes when both sets $\mapping$ and $\datadis$ are empty.
  \end{subsection}
  
  \begin{subsection}{Consistency of SWIRL semantics} \label{subsec:optimisation-consistency}
    This section defines a concurrency relation on the derivations of a workflow system $\work$, which is then used to show the consistency of different execution diagrams through the semantics. As commonly done in the literature, this section only considers reachable workflow systems defined below.

    \begin{definition}
      Given a distributed workflow instance $\instance =\left(W, L, \mathcal{M}, \Data{}, \mapdati\right)$ and the function $\enc{\cdot}: \mathcal{W}_{\instance} \lts{} \mathcal{W}_{\work} $, the \emph{initial state} of a distributed workflow system is 
      {\footnotesize
      \[
        \work_{Init}=\enc{\instance, L, *}= \prod_{l_j\in L} \lconfig{l_{j}}{\emptyset}{\prod_{s\in\load(l_j)} B(s)}
      \]}
    \end{definition}

    \begin{definition}
      A state of a workflow system $\work$ is \emph{reachable} if it can be derived from the initial state ($\work_{Init}$) by applying the rules in Figs.~\ref{fig:swirl-congruence-rules} and \ref{fig:swirl-semantics}.
    \end{definition}

    Having a transition $t:\work\lts{}\work'$, the workflow states $\work$ and $\work'$ are called \emph{source} and \emph{target} of the transition $t$, respectively. 
    The \emph{concurrency relation} is defined on the transitions having the same source. Formally:
    
    \begin{definition}[Concurrency relation]
      Two different transitions $t_1: \work \lts{}\work_1$ and $t_2: \work \lts{}\work_2$ having the same source, are always concurrent, written $t_1 \conc t_2$. 
    \end{definition}

    Following the standard notation, let $t_2/t_1$ represent a transition $t_2$ executed after the transition $t_1$. The concurrency relation is used to prove the \emph{Church-Rosser property}, which states that when two concurrent transitions execute at the same time, the ordering of the executions does not impact the eventual result. This finding shows that the concurrent semantics is confluent. Formally:

    \begin{lemma}[Church-Rosser property]\label{lm:church}
      Given two concurrent transitions $t_1: \work\lts{} \work_1$ and $t_2: \work\lts{} \work_2$, there exist two transitions $t_2/t_1: \work_1\lts{} \work_3$ and $t_1/t_2: \work_2\lts{} \work_3$ having the same target. 
    \end{lemma}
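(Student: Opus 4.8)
\medskip
\noindent\textbf{Proof idea.}
The concurrency relation declares \emph{every} pair of distinct co-initial transitions concurrent, so establishing Lemma~\ref{lm:church} amounts to proving the strong diamond property for arbitrary distinct transitions $t_1,t_2$ out of the same reachable $\work$. The plan is to proceed by induction on the derivations of $t_1$ and $t_2$ in order to peel off the purely contextual rules \textsc{L-Par}, \textsc{Sec}, \textsc{Par} and the structural rule \textsc{Congr}: these only embed a redex in a larger sequential/parallel context or rearrange the term up to $\congr$, so they reduce the problem, modulo structural congruence, to a case analysis on the three rules that actually rewrite a redex, namely \textsc{Exec}, \textsc{L-Comm} and \textsc{Comm}. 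I would first record that each transition is determined, up to $\congr$, by the rule it applies together with the multiset of leading predicates ($\exec{\cdot}$, $\send{\cdot}$, $\recv{\cdot}$) that it consumes; I call this multiset the \emph{redex} of the transition.

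The technical heart is a monotonicity invariant that I would state and prove first: along any reduction the data set $\Data{}$ of every location configuration can only grow. Indeed \textsc{Exec} adds $\Out^{D}(s)$, \textsc{Comm} adds the transferred $\{d\}$ at the receiver, and \textsc{L-Comm} leaves every data set unchanged; no rule ever deletes a data element. Since every side condition in Fig.~\ref{fig:swirl-semantics} has the positive form $\In^{D}(s)\subseteq\Data{i}$ or $d\in\Data{}$, monotonicity guarantees that firing $t_1$ can never falsify the enabling condition of $t_2$ (and symmetrically). This is exactly what makes the coarse ``always concurrent'' relation sound: nothing a transition does can disable another enabled transition.

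With monotonicity available, I close the diamond by cases on the two redexes. When they are disjoint --- transitions at different locations, on different parallel threads of a single location, or an \textsc{Exec} paired with a communication that consumes different leading predicates --- the two transitions touch disjoint prefixes, so by the contextual rules each still fires after the other, the residuals $t_2/t_1$ and $t_1/t_2$ are the same rule applied to the untouched redex, and the two orders reach the same target: the final data sets agree because union is commutative and idempotent, e.g.\ $\Data{}\cup\Out^{D}(s)\cup\Out^{D}(s')=\Data{}\cup\Out^{D}(s')\cup\Out^{D}(s)$, and the residual traces coincide up to \textsc{(Comt$_{u}$)} and the identity laws of Fig.~\ref{fig:swirl-congruence-rules}. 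The multi-location synchronisation of \textsc{Exec} is subsumed here, since a step mapped onto all of $\mapping(s)$ admits a single \textsc{Exec} transition and only enlarges the data at those locations.

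The case I expect to be the main obstacle is an \emph{overlapping} communication redex, which really does arise from Def.~\ref{def:buildingblock}: when a port $\mapdati(d_i)$ carries two distinct data elements produced at the same source, the building block $B_{l'}(s)$ contains two syntactically identical prefixes $\recv{\mapdati(d_i),l_j,l'}$, so a single $\send{\messa{d}{p},l,l'}$ can be matched against either of them. Then $t_1$ and $t_2$ consume the same send but different receives, and each naively disables the other. I would resolve this by re-pairing the leftover prefixes: from $\work_1$ the still-pending send is matched with the receive that $t_2$ had consumed, and symmetrically from $\work_2$, so one further \textsc{Comm} step from each side delivers the remaining datum and reaches a common $\work_3$ (the receiver continuations end up composed in parallel in either order, equal up to \textsc{(Comt$_{u}$)}, and the data sets agree by idempotent union). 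This re-pairing only works because reachable systems satisfy a well-formedness invariant inherited from the encoding and preserved by reduction --- over each channel $(p,l,l')$ the pending $\send{\cdot}$ and $\recv{\cdot}$ prefixes come in equal numbers --- so no send or receive is ever stranded without a partner. Carrying out this re-pairing while tracking exactly which predicate each transition consumes, and checking that the two residual traces coincide up to structural congruence, is the most delicate bookkeeping in the proof; all remaining cases follow the uniform pattern of the disjoint case.
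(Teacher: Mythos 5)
Your overall strategy is the same as the paper's --- a case split on whether the two transitions act at different locations or on different parallel threads of the same location, closed by arguing that neither firing disturbs the other --- but you carry it substantially further than the paper does. The paper's proof treats only those two disjoint-redex cases and simply asserts that concurrent transitions ``do not affect each other''; it neither isolates the monotonicity of the data sets (which is exactly what justifies that assertion, since every side condition $\In^{D}(s)\subseteq\Data{i}$ or $d\in\Data{}$ is positive and no rule ever removes data) nor says anything about transitions that share a prefix. Your explicit monotonicity invariant, together with the observation that commutativity and idempotence of union make the final data sets agree, is the right way to make the paper's informal claim precise; and your identification of the overlapping $\send{}$/$\recv{}$ redex as the crux addresses a case that the ``always concurrent'' relation of the definition forces into scope but that the paper's proof silently skips.

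That said, the re-pairing step is the one place where your argument is not yet airtight. The invariant you invoke --- equal numbers of pending $\send{}$ and $\recv{}$ prefixes per channel $(p,l,l')$ --- guarantees that a partner for the stranded receive exists somewhere, but not that it is \emph{enabled} at $\work_1$, which is what you need for $t_2/t_1$ to be a single transition rather than a longer derivation. Concretely: if two distinct steps mapped onto the same location $l$ both produce data onto the same port $p$ consumed at $l'$ (which Definition~\ref{def:workflow} does not forbid), then after only the first of them has executed one can reach a state with one enabled send and two enabled receives on $(p,l,l')$, the remaining sends still being guarded by the second $\exec{\cdot}$; two transitions sharing that single send then cannot both be completed by one further \textsc{Comm} step. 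To close this you must either strengthen the reachability invariant (e.g., sends and receives on a channel are balanced \emph{per producing building block}, and all sends of a building block become enabled simultaneously, so an enabled send always has an enabled sibling for each competing receive) or restrict attention to single-producer ports. Since the paper's own proof never confronts this situation at all, making that invariant explicit would be an improvement on the published argument rather than a mere repair of yours.
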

  \end{subsection}

\end{section}

\begin{section}{Optimisation} \label{sec:optimisation}
  This section introduces an \emph{optimisation function} that scans the entire workflow system to remove redundant communications, improving performance. In particular, there are two cases in which execution traces can be optimised: $(i)$ communications between steps deployed on the same location, which are always redundant; $(ii)$ multiple communications of the same data element between a pair of locations, when different steps mapped onto the destination location require the same input data from the same ports.
    
  The encoding function adds a communication to the workflow system $\work$ every time a data element is required for the execution of a step, no matter if it is already present at the destination location, creating unnecessary communications. For instance, consider a location $\lconfig{l}{D}{\ex}$ where $D = \emptyset$ and
  {\footnotesize
  \begin{equation*}
    \ex = \begin{aligned}[t]
      &\recv{p,l_1,l}.\exec{s,\inout{d}{d_1},\{l\}}.\send{\messa{d_1}{p_1},l,l}\para\\
      &\recv{p_1,l,l}.\exec{s_1, \inout{d_1}{\emptyset},\{l\}}
    \end{aligned}
  \end{equation*}}

  After the execution of the step $s$, the data element $d_1$ is saved on the location $l$ ($D\cup \{d_1\}$), therefore the $\mathtt{send}$/$\mathtt{recv}$ pair does not affect the state of $\work$. By removing the unnecessary communication, the trace $\ex$ can be rewritten as:

  {\footnotesize
  \begin{equation*}
    \ex' = \begin{aligned}[t]
      &\recv{p,l_1,l}.\exec{s,\inout{d}{d_1},\{l\}}\para\exec{s_1,\inout{d_1}{\emptyset}, \{l\}}
    \end{aligned}
  \end{equation*}}

  The rule (\textsc{Exec}) in Fig.~\ref{fig:swirl-semantics}, preserves dependency between steps $s$ and $s_1$ by ensuring that step $s_1$ will not execute until the required data $d_1$ is produced.

  The second optimisation step is to remove redundant communications between different pairs of locations when the same data element is sent multiple times through the same port. 
  For instance, consider two locations $\lconfig{l}{D}{\ex}$ and $\lconfig{l'}{D'}{\ex'}$ where $D = D' = \emptyset$ and 
  {\footnotesize
  \begin{align*}
    \ex  &= \recv{p,l_1,l}.\exec{s,\inout{d}{d_1},\{l\}}.(\prod^{3}_{i=1}\send{\messa{d_1}{p_1},l,l'})\\\vspace{-1mm}
    \ex' &= \prod^{3}_{i=1}\recv{p_1,l,l'}.\exec{s_i,\inout{d_1}{\emptyset},\{l'\}} 
    \vspace*{-2mm}
  \end{align*}}
  The first location $l$ sends the data element $d_{1}$ to three steps mapped onto location $l'$.
  Transferring the data element only once is enough, as the subsequent communications will not affect the state of $\work$, hence, there is:
  {\footnotesize
  \begin{align*}
    \ex  &= \recv{p,l_1,l}.\exec{s,\inout{d}{d_1},\{l\}}.\send{\messa{d_1}{p_1},l,l'}\\
    \ex' &=
      \begin{aligned}[t]
        &\recv{p_1,l,l'}.\exec{s_k,\inout{d_1}{\emptyset},\{l'\}}\para\prod^{3}_{i=1, i\neq k}\exec{s_i,\inout{d_1}{\emptyset},\{l'\}}
      \end{aligned}
  \end{align*}}

  The optimisation of a workflow system $\work$ is defined in terms of three functions: the first and the second\footnote{The two functions have the same notation to simplify the notation, they can be easily distinguished because of the different number of arguments.} ones start the optimisation process and controls it till the end, by taking the additional parameter $\prefset$ (the set of all prefixes/actions) and calling the third function that actually rewrite the execution trace of a location. It goes through the execution traces of the workflow $\work$ and breaks them into single action (prefix) blocks. Analysing the blocks one by one, it performs the following actions: $(i)$ if the predicate is a part of the communication on the same location, it is removed; $(ii)$ if the predicate is already in the set $\prefset$, it is removed as well (meaning the same data element was already sent to the same location through the same port, just to different step); $(iii)$ otherwise, the predicate is added to the set $\prefset$ and the drilling function moves to the next element.

  \begin{definition}\label{def:optimisationfun}
    Given the workflow system $\work$ and sets of workflow and optimised systems $\mathcal{W}_{\work}$ and $\mathcal{W}_{\opt}$, respectively, the optimisation function $\work$, $\localopt{\cdot}:\mathcal{W}_{\work}\lts{}\mathcal{W}_{\opt}$ is defined in terms of the auxiliary functions, $\localopt{\cdot}:\mathcal{W}_{\work}\times  \prefset\lts{}\mathcal{W}_{\opt}$ and $\traceopt{\cdot}:\mathcal{W}_{\work}\lts{}\mathcal{W}_{\opt}$ (where  $\operator \in\{|,.\}$ and $\prefset_{l,l}=\{\send{\messa{d}{p},l,l}, \recv{p,l,l}\}$) as follows: 
    
    {\footnotesize
    \vspace{-3mm}
    \begin{align*}
      \localopt{\work} &=\localopt{\traceopt{\work},\emptyset}  \qquad\qquad  \qquad\qquad\qquad \\
      \localopt{\traceopt{\lconfig{l}{\Data{}}{\ex}},\prefset} & = \lconfig{l}{\Data{}}{\localopt{\traceopt{\ex},\prefset}}
      \quad\quad \\
      \localopt{\traceopt{\work_1\para \work_2},\prefset}& = \localopt{\traceopt{\work_1},\prefset}\para \localopt{ \traceopt{\work_2},\prefset}\\
      \traceopt{\ex\;\operator \;\ex_1} &=\traceopt{\ex}\operator\traceopt{\ex_1} \qquad\qquad  \\
      \localopt{ \ex\;\operator \traceopt{\mu} \operator \traceopt{\ex_1},\prefset} & = 
      \begin{cases}
            \localopt{\ex\;\operator \;\nul\; \operator \traceopt{\ex_1},\prefset}
            \;\;\quad\;\;\;\text{ if } \mu\in \prefset \quad\vee
            \quad\mu\in\prefset_{l,l}\\
            \localopt{ \ex\;\operator \;\mu\; \operator \traceopt{\ex_1},\prefset\cup \mu} 
          \quad\text{ otherwise }
      \end{cases}\\
      \localopt{e, \prefset}& =e
    \end{align*}}
  \end{definition}

  The two workflow systems $\work$ and $\opt=\localopt{\work}$ are modelling the same behaviour of the distributed workflow system, i.e. the computations of the workflow steps are executed in the same order in both systems with the difference in the number of communications. Therefore, the weak barbed bisimulation~\cite{picalculus} is used to define the relation between the distributed workflow system and its optimised version. 
   
  To highlight that the executing action is a communication, it is labelled by $\tau$. Therefore, $\work\lts{\tau}\work'$ indicates that workflow system $\work$ can evolve into $\work'$ by performing the communication (transfer) action. The reflexive and transitive closure of $\lts{\tau}$ is denoted with $\xRightarrow{\text{$\tau$}}$ and the transition $\work\xRightarrow{\text{$\tau$}}\work'$ express the ability of the system $\work$ to evolve into $\work'$ by executing some number, possibly zero, of $\tau$ actions (communications). Given the transition $\work\lts{}\work'$ (any type of action, including the communication), if the same action can be executed after a certain number of communication actions, it is denoted as $\work\xRightarrow{\text{$\tau$}}\lts{}\work'$.

  The observable elements in this setting are the executions of the steps and it is denoted by $\work\barb{\nu}$ (resp. $\opt\barb{\nu}$) where $\nu=\exec{s, F(s), \mapping(s)}$ where the weak barb is denoted by $\work\wbarb{\nu}$ (resp. $\opt\wbarb{\nu}$), and it is defined as $\work\xRightarrow{\text{$\tau$}}\barb{\nu}$ (resp. $\opt\xRightarrow{\text{$\tau$}}\barb{\nu}$). Hence, the barbed bisimulation will check that all the step executions in a workflow system can be matched by the executions in the optimised one. 
    
  \begin{definition}\label{def:barbedbisimilarity}
    A relation $\rel\subseteq \mathcal{W}\times \mathcal{O}$ is a weak barbed simulation if $\work \rel\localopt{\work}$:
    \begin{itemize}
    \vspace{-1mm}
      \item $\work\barb{\nu}$ implies $\localopt{\work}\wbarb{\nu}$
      \item $\work\lts{}\work'$ implies $\localopt{\work}\Rightarrow \localopt{\work'}$ with $\work' \rel \localopt{\work'}$
    \end{itemize}
    A relation $\rel\subseteq \mathcal{W}\times \mathcal{O}$ is a weak barbed bisimulation if $\rel$ and $\rel^{-1}$ are weak barbed simulations. Weak bisimilarity, $\bisim$, is the largest weak barbed bisimulation.
  \end{definition}
       
  The next theorem shows the operational correspondence between a distributed workflow system and its optimised term.

  \begin{theorem}\label{thm:operational_correspondence}
    For any distributed workflow system $\work$, $\work\bisim\localopt{\work}$.
  \end{theorem}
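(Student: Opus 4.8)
The plan is to exhibit an explicit weak barbed bisimulation $\rel$ (Def.~\ref{def:barbedbisimilarity}) containing the pair $(\work,\localopt{\work})$ and to show it is closed under the reduction semantics of Fig.~\ref{fig:swirl-semantics}. The guiding observation is that, by construction (Def.~\ref{def:optimisationfun}), $\localopt{\cdot}$ touches only $\send$/$\recv$ prefixes: it replaces some of them by $\nul$, while every $\exec{s,F(s),\mapping(s)}$ prefix is preserved together with its guards and its position in the sequential/parallel structure of each trace. Since the only observables are the execs ($\work\barb{\nu}$ with $\nu=\exec{s,F(s),\mapping(s)}$), the two systems share the same \emph{execution skeleton}, and the whole difficulty reduces to showing that the communications deleted by $\localopt{\cdot}$ are unobservable, i.e. they neither create nor destroy any weak barb.

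First I would record two invariants on reachable workflow systems. \emph{Data monotonicity:} along any reduction each data set $\Data{l}$ only grows, since rule (\textsc{Exec}) adds $\Out^{D}(s)$, rule (\textsc{Comm}) adds the transferred datum to the receiver, and no rule ever removes data. \emph{Inertness of deleted communications:} a prefix deleted by $\localopt{\cdot}$ is either (i) a local pair $\send{\messa{d}{p},l,l}$/$\recv{p,l,l}$, whose firing by (\textsc{L-Comm}) requires $d\in\Data{}$ and leaves $\Data{}$ unchanged, or (ii) a duplicate $\send{\messa{d}{p},l,l'}$/$\recv{p,l,l'}$ already recorded in the prefix set $\prefset$, whose firing by (\textsc{Comm}) delivers to $l'$ a datum that an earlier, retained copy already delivered; by monotonicity the resulting state is data-equivalent to the source. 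Hence firing a deleted communication is a $\tau$-step that changes neither the available data (up to duplicates) nor the set of enabled execs, and a single retained transfer of each $(d,p,l,l')$ suffices to enable every step that consumes $d$ on $l'$.

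With these invariants in hand, I would take $\rel$ to relate a reachable $\work_1$ (descendant of $\work$) with a reachable $\opt_1$ (descendant of $\localopt{\work}$) whenever they share the same execution skeleton and agree on the \emph{effective} data at every location, where a datum counts as effectively present at $l'$ if it already lies in $\Data{l'}$ or is still to be delivered there by a pending retained communication. One then checks the two clauses of Def.~\ref{def:barbedbisimilarity} by case analysis on the firing rule. The barb clause holds because the same exec is enabled on both sides up to replaying the retained communications, giving $\work_1\barb{\nu}$ implies $\opt_1\wbarb{\nu}$ and symmetrically. For transitions: an (\textsc{Exec}) step is matched by the same exec, prefixed on the optimised side by the finitely many retained transfers it depends on, which can be commuted to the front using the diamond of Lemma~\ref{lm:church}; a retained (\textsc{Comm})/(\textsc{L-Comm}) step is matched directly; and a \emph{deleted} communication on the full side is matched by zero steps, staying inside $\rel$ thanks to the inertness invariant. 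The reverse simulation for $\rel^{-1}$ is symmetric, the only point being that every communication available on the optimised side is a retained one, so the full side can always replay it, its extra redundant copies never blocking it, again by Lemma~\ref{lm:church}.

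I expect the main obstacle to be exactly this bookkeeping of effective data versus pending communications: because the full system may deliver a datum through a redundant transfer while the optimised system delivers the same datum through the single retained transfer, the two sides can fall out of step on \emph{which} pending communications remain, while still agreeing on the data that actually enables execs. Proving that $\rel$ is nonetheless closed under reduction, i.e. that no deleted communication is ever load-bearing and that the retained transfer always precedes, in the dependency order enforced by the building-block structure, every consuming step, is the crux; this is where the monotonicity invariant and the confluence provided by Lemma~\ref{lm:church} do the real work.
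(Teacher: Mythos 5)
Your proposal is sound, but it takes a genuinely different route from the paper. The paper's proof takes the relation to be simply the graph of the optimisation function on reachable states, $\rel=\{(\work,\localopt{\work})\}$, and pushes all the work into two lemmas (Correspondence and Completeness) proved by induction on the reduction derivation with a case analysis on the last applied rule: each case shows that $\work\lts{}\work'$ is matched by $\localopot$-free reasoning of the form $\localopt{\work}\weak\localopt{\work'}$, where a deleted communication is matched by zero $\tau$-steps and a retained one by the corresponding (\textsc{Comm})/(\textsc{L-Comm}) step; the barb clause is dispatched by observing that $\localopt{\cdot}$ never touches $\exec{\cdot}$ prefixes. You instead construct a looser, semantic bisimulation candidate (same execution skeleton plus agreement on ``effective'' data, i.e.\ data already present or still deliverable by a pending retained transfer), supported by a data-monotonicity invariant, an inertness invariant for deleted communications, and Lemma~\ref{lm:church} to commute retained transfers in front of the execs they enable. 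What the paper's approach buys is a mechanical, rule-by-rule argument with a trivially defined relation; what it quietly assumes is that the state reached from $\localopt{\work}$ is \emph{syntactically} $\localopt{\work'}$, which requires the prefix sets accumulated by the optimiser to line up with the residuals of reduction -- a point its cases treat informally. Your approach buys robustness precisely there: you never need the optimised reduct to coincide syntactically with the optimisation of the reduct, only that the two sides agree on effective data, at the cost of the extra bookkeeping you correctly identify as the crux. (One small correction: the macro-level claim that deleted local $\recv$ prefixes cannot prematurely enable an exec should be justified, as the paper does, by the guard $\In^{D}(s)\subseteq\Data{i}$ of rule (\textsc{Exec}), not only by your inertness invariant.)

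*Editor's note: the token ``$\localopot$-free'' above is a typo for ``$\localopt{\cdot}$-based''; read the sentence as describing matching transitions of the form $\localopt{\work}\weak\localopt{\work'}$.*
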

\end{section}

\begin{section}{Implementation} \label{sc:implementation}
  The SWIRL compiler reference implementation\footnote{\url{https://github.com/alpha-unito/swirlc}}, called \texttt{swirlc}, follows the same line as the theoretical approach, separating scientific workflows' design and runtime phases. On the one hand, it allows the translation of high-level, product-specific workflow languages designed for direct human interaction to chains of low-level primitives easily understood by distributed runtime systems. A common representation fosters composability and interoperability among different workflow models, which can be easily combined into a single workflow system. Moreover, the translation process is performed with the formal consistency guarantees discussed in Sec.~\ref{subsec:encoding}.
  
  Finally, a SWIRL-based compiler can translate a workflow system $\work$ to a high-performance, self-contained workflow execution bundle based on send/receive communication protocols and runtime libraries, which can easily be included in a Research Object \cite{Bechhofer13}, improving reproducibility. An advanced compiler can also generate multiple execution bundles from the same workflow system, each optimised for a different execution environment (e.g., Cloud, HPC, or Edge), improving performance. As a bonus feature, the intrinsically distributed nature of SWIRL execution traces promotes decentralised runtime architectures, avoiding the single point of failure introduced by a centralised control plane.

  \begin{figure*}[t]
    \begin{center}
        \includegraphics[scale=0.6]{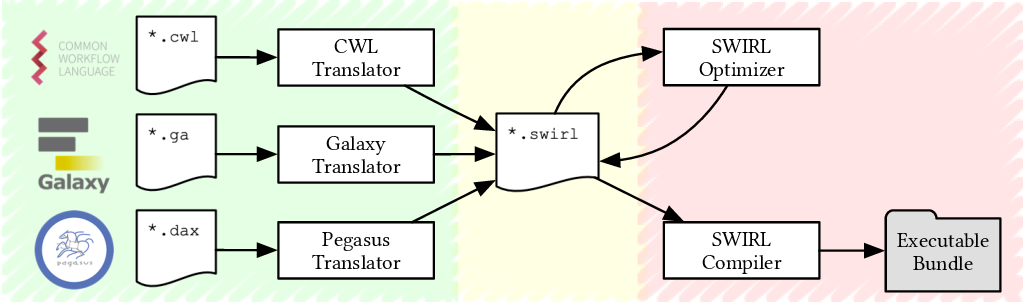}
        \caption{{\small{The SWIRL compiler toolchain.}}}
        \label{fig:swirl-toolchain}
    \end{center}
    \vspace{-6mm}
  \end{figure*}
  Fig.~\ref{fig:swirl-toolchain} sketches the SWIRL compiler toolchain. We implemented the SWIRL grammar using ANTLR \cite{Parr95}, and we automatically generated Python3 parser classes to process the SWIRL syntax. All the components of the SWIRL toolchain rely on these parsers to process $\mathtt{*.swirl}$ files. An abstract $\mathtt{SWIRLTranslator}$ class implements the encoding function, producing a SWIRL file from a workflow instance $\instance$. A concrete implementation specialises the $\mathtt{SWIRLTranslator}$ logic to the semantics of a given workflow language, e.g., CWL \cite{CWL:2022}, DAX (for Pegasus \cite{pegasus:2019}), or the Galaxy Workflow Format (GWF)\cite{galaxy:2022}\footnote{\label{ongoing}The implementation of the CWL and GWF translators and the $\mathtt{SWIRLOptimizer}$ are ongoing works.}. A $\mathtt{SWIRLOptimizer}$\footnotemark[\getrefnumber{ongoing}] class implements the optimisation function $\localopt{\cdot}:\mathcal{W}_{\work}\lts{}\mathcal{W}_{\opt}$, generating an optimised $\mathtt{*.swirl}$ file. Finally, an abstract $\mathtt{SWIRLCompiler}$ class produces an executable bundle from a $\mathtt{*.swirl}$ file and a declarative metadata file, which contains additional information not currently modelled in the SWIRL semantics, e.g., step commands, data types and location IP addresses. We have implemented a simple compiler class that generates a multithreaded Python program for each location, relying on TCP sockets for send/receive communications. 
\end{section}

\begin{section}{Evaluation} \label{sec:evaluation}
  \begin{SCfigure}[][t]
  \vspace{-2mm}
      \centering
        \includegraphics[scale=0.8]{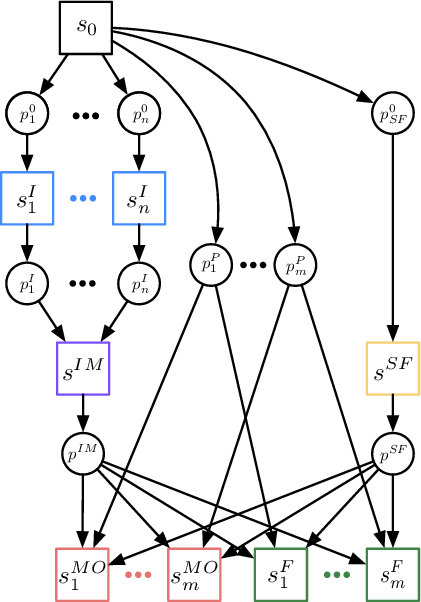}%
        \captionsetup{singlelinecheck=off}
        \caption[.]{
          \small{
              Graphical representation of the 1000 Genomes workflow contains five classes of steps mapped to diverse locations: $(i)$ \texttt{individuals} (blue), number of steps $n$, mapped to locations $l^{I}_{j}$, $j\in[1, a]$; $(ii)$ \texttt{individuals\_merge} (violet), a single step mapped to location $l^{IM}$; $(iii)$ \texttt{sifting} (yellow), a single step mapped to location $l^{IM}$; $(iv)$ \texttt{mutations\_overlap} (red), number of steps $m$, mapped to locations $l^{MO}_{t}$, $t\in[1,b],$ and $(v)$ \texttt{frequency} (green) number of steps $m$, mapped to locations $l^{F}_{k}$, $k\in [1, c]$. The initial step $s_{0}$, mapped to driver location $l_{d}$ is a step that sends each input data element to the correct location for processing.
              The mapping between data elements and ports, where $i\in[1, n]$ and $h \in [1, m]$, is:
              {\footnotesize
                  \begin{equation*}
                      \mapdati = \left\{
                          \begin{aligned}
                              & (d^{0}_i, p^{0}_i), (d^{P}_h, p^{P}_h), (d^{0}_{SF}, p^{0}_{SF}),\\
                              & (d^{I}_i, p^{I}_i),(d^{IM}, p^{IM}) , (d^{SF}, p^{SF}) 
                          \end{aligned}
                      \right\}
                  \end{equation*}
              }
              \vspace{-2mm}
          }
      }
      \label{fig:1000genome}
  \end{SCfigure}
    
  This section tests the flexibility of the SWIRL representation on the 1000 Genomes workflow \cite{Silva:19}, a Bioinformatics pipeline aiming at fetching, parsing and analysing data from the 1000 Genomes Project \cite{1000Genomes:15} to identify mutational overlaps and provide a null distribution for rigorous statistical evaluation of potential disease-related mutations. 
  However, we used the 1000 Genomes applications written in C++\cite{23:hipc:capio}.
  Fig.~\ref{fig:1000genome} shows a slightly simplified version of the 1000 Genomes workflow model. We removed some ports to simplify the notation, but their absence does not affect the reasoning reported in the rest of this Section. Note that the number of locations could be smaller than the number of steps. Hence, there could be a case when more steps are mapped to the same location.
  
  The corresponding workflow system $\work$ can be constructed using the encoding function $\enc{\cdot} : \mathcal{W}_{\instance} \lts{} \mathcal{W}_{\work}$ (Sec.~\ref{subsec:encoding}). It can be written as follows:
  {\footnotesize
  \begin{align*}
    \work = &\prod_{i\in\{d,SF,IM\}}\lconfig{l^{i}}{\emptyset}{\ex^{i}}
    \para \prod_{j=1}^{a} \lconfig{l^{I}_{j}}{\emptyset}{\ex^{I}_{j}}
    \para  \prod_{t=1}^{b} \lconfig{l^{MO}_{t}}{\emptyset}{\ex^{MO}_{t}}
    \para \prod_{k=1}^{c} \lconfig{l^{F}_{k}}{\emptyset}{\ex^{F}_{k}}
  \end{align*}}
  where each execution trace $\ex^{*}_{*}$ defines the actions (steps and data transfers) depicted in Fig.~\ref{fig:1000genome} to be executed on the corresponding location. For instance, if the driver location is taken, the the execution trace $e^{d}$ is defines as:
    {\footnotesize
  \begin{align*}
    \ex^{d}= &\prod_{i=1}^{n} \send{\messa{d^{0}_i}{p^{0}_i},l^{d},l^{I}_{j}}\para \send{\messa{d^{0}_{SF}}{p^{0}_{SF}},l^{d},l^{SF}}\para\\
    &\prod_{h=1}^{m} (\send{\messa{d^{P}_h}{p^{P}_h},l^{d},l^{MO}_{t}}\para \send{\messa{d^{P}_h}{p^{P}_h},l^{d},l^{F}_{k}})
  \end{align*}}
  The full representation of $\work$ is discussed in Appendix~\ref{appendix:evaluation}. The 1000 Genomes workflow modelled above can be reproduced using the SWIRL implementation (Section \ref{sc:implementation}). To keep the experiment small and ease reproducibility, the ten homogeneous execution locations and a single chromosome, i.e., a single workflow instance, are considered. The necessary installing package and instructions on how to run the experiment can be find in \cite{zenodo-swirl}.
\end{section}

\begin{section}{Conclusion} \label{sc:conclusion}
  This work introduced SWIRL, a ``Scientific Workflow Intermediate Representation Language'' based on send/receive communication primitives. An encoding function maps any workflow instance onto a distributed execution plan $\work$, fostering interoperability and composability of different workflow models. A set of rewriting rules allows for automatic optimisation of data communications, improving performance with correctness and consistency guarantees. The optimised SWIRL representation can be compiled into one or more self-contained executable bundles addressing specific execution environments, ensuring reproducibility and embracing heterogeneity. SWIRL already proved itself to be flexible enough to model a real large-scale scientific workflow (even if still not supporting all features of modern WMSs). 

  The foundational contribution of SWIRL is to propose a novel direction to solve well-known problems in the field of scientific workflows. Indeed, SWIRL shifts the focus from high-level workflow languages, designed either for direct human interaction or to encode complex, product-specific features, to a low-level minimalistic representation of a workflow execution plan, which is far more manageable from both formalisation methods and compiler toolchains. In this context, we hope that SWIRL can pave the way to a novel, more formal approach to distributed workflow orchestration research.

  The formal SWIRL representation gives the possibility to build the formally correct extensions, for instance, adding a type system where the multiparty sessions are enriched with security levels for messages (data in our case)\cite{Capecchi:16} or deriving the causal-consistent reversible framework by applying the approach\cite{Lanese:20}, that later can be used as a base to build fault-tolerance mechanism.
\end{section}

\begin{credits}
  \subsubsection{\ackname}
  This work was supported by: the Spoke 1 ``FutureHPC \& BigData'' of ICSC - Centro Nazionale di Ricerca in High-Performance Computing, Big Data and Quantum Computing, funded by European Union - NextGenerationEU; the EUPEX EU’s Horizon 2020 JTI-EuroHPC research and innovation programme project under grant agreement No 101033975.
\end{credits}

\clearpage

%!TEX root = main.tex

\appendix

\setcounter{page}{1}

\begin{section}{Proofs}\label{app:a}
  Having a transition $t:\work\lts{}\work'$, the workflow states $\work$ and $\work'$ are called \emph{source} and \emph{target} of the transition $t$, respectively. Two transitions are defined \emph{coinitial} if they have the same source and \emph{cofinal} if they have the same target. When the target of a transition $t_{1}$ is the source of the transition $t_{2}$, the two transitions are said to be \emph{composable}, and their composition is denoted with $t_1;t_2$. A \emph{derivation} is a sequence of composable transitions, and it is written as $\mathtt{t}:\work\lts{\;}^{*}\work'$, while the special symbol $\varepsilon$ denotes the empty derivation.

  \begin{lemma}[Church-Rosser property]\label{lm:church-appendix}
    Given two concurrent transitions $t_1: \work\lts{} \work_1$ and $t_2: \work\lts{} \work_2$, there exist two cofinal transitions $t_2/t_1: \work_1\lts{} \work_3$ and $t_1/t_2: \work_2\lts{} \work_3$. 
  \end{lemma}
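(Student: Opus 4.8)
The plan is to derive the Church--Rosser property from a single \emph{independence lemma}: if two coinitial transitions are distinct, then the redex reduced by one is left intact and still enabled (with the same effect) after firing the other. Once this is established, the diamond closes immediately, since firing $t_1$ followed by the residual $t_2/t_1$, or $t_2$ followed by $t_1/t_2$, consumes exactly the same prefix occurrences and performs exactly the same data additions, so both orders reach a common $\work_3$ up to $\congr$.

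The key structural observations that make the independence lemma go through are two monotonicity facts about the reduction rules of Fig.~\ref{fig:swirl-semantics}. First, no rule ever removes a data element from a location: (\textsc{Exec}) replaces each $\Data{i}$ by $\Data{i}\cup\Out^{D}(s)$, (\textsc{Comm}) replaces $\Data{}'$ by $\Data{}'\cup\{d\}$, and (\textsc{L-Comm}) leaves the data set unchanged; hence data sets only grow along reductions. Second, every transition consumes a specific, syntactically identifiable occurrence of prefixes --- the leading $\exec{s,F(s),\mapping(s)}$ prefixes in each location of $\mapping(s)$ for (\textsc{Exec}), and a matching $\send{\cdot}/\recv{\cdot}$ pair for (\textsc{Comm}) and (\textsc{L-Comm}) --- and two distinct coinitial transitions consume disjoint occurrences. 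This disjointness is where the congruence rules matter: since both transitions are already enabled in $\work$, each of their redexes has been brought to head position through (\textsc{L-Par}), (\textsc{Par}) and (\textsc{Congr}), so they sit in distinct parallel branches rather than being sequentially ordered inside some $\ex_1.\ex_2$; a redex blocked behind another in a sequential composition is simply not enabled and cannot be one of $t_1,t_2$.

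I would then prove the independence lemma by a case analysis on the rule types of $t_1$ and $t_2$: both (\textsc{Exec}), both communications, or one of each. In every case, firing $t_1$ only deletes prefixes disjoint from those of $t_2$ and only enlarges data sets, so the enabling side-condition of $t_2$ --- either $\In^{D}(s)\subseteq\Data{i}$ or $d\in\Data{}$, both already satisfied in $\work$ --- continues to hold, and $t_2$ remains firable with an unchanged effect; symmetrically for $t_1$ after $t_2$. Commutativity of set union then yields identical data sets in the two targets, and the two firings consuming the same pair of redexes gives identical residual traces, so the targets coincide modulo the rules of Fig.~\ref{fig:swirl-congruence-rules}.

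The main obstacle I anticipate is the (\textsc{Exec}) synchronisation across several locations: when $t_1$ and $t_2$ are two $\mathtt{exec}$ transitions for different steps sharing a common location $l\in\mapping(s_1)\cap\mapping(s_2)$, I must argue that their leading $\mathtt{exec}$ prefixes at $l$ are genuinely distinct parallel occurrences, so that the multi-location synchronisation performed by one does not disturb the other. The remaining care is bookkeeping with structural congruence, since transitions are defined only up to $\congr$ and the common target $\work_3$ is reached up to $\congr$ rather than on the nose.
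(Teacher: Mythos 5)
Your proposal is essentially the paper's own argument made explicit: the paper likewise closes the diamond by observing that the two redexes occupy disjoint parallel components (either of the whole system $\work = \work_1 \para \work_2$ or of a single location's trace $\ex_1 \para \ex_2$) and hence ``do not affect each other,'' which is precisely your disjointness-of-consumed-prefixes observation combined with the monotone growth of the data sets. The one point you assert but do not fully justify --- that two distinct coinitial transitions always consume \emph{disjoint} prefix occurrences, which can fail when a single head $\mathtt{send}$ is matched by two competing head $\mathtt{recv}$s (so the residual of one communication after the other needs a second, identical send occurrence, guaranteed only by the reachability invariant of the encoding) --- is glossed over in the paper's proof as well, so your argument is at least as rigorous as the original.
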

  
  \begin{proof} 
    Having two concurrent transitions $t_1: \work\lts{\mu_1} \work_1$ and $t_2: \work\lts{\mu_2} \work_2$, there are two cases:
    \begin{center}
      \begin{tikzpicture}
        [scale=.4,auto=left,every node/.style={rectangle}]
        \node (n1) at (0,3.5) {$\work_1$};
        \node (n2) at (3.5,0)  {$\work_3$};
        \node (n3) at (3.5,7)  {$ \work$};
        \node (n4) at (7,3.5) {$\work_2$};
          \draw[->, thick]  (n3) -- (n1) node[pos=.6, above=1mm] {$t_1$} ; 
          \draw[->, thick]  (n3) --  (n4) node[pos=.6, above=1mm] {$t_2$}; 
          \draw[->, dotted]  (n1) --  (n2) node[pos=.1, below=2mm] {$t_2/t_1$}; 
          \draw[->, dotted]  (n4) --  (n2) node[pos=.1, below=2mm] {$t_1/t_2$}; 
      \end{tikzpicture}
    \end{center}
    \begin{itemize}
      \item when transitions $t_1$ and $t_2$ are executed on different locations; in this case the system $\work$ can be represented as $\work_1\para \work_2$ where transition $t_1$ ($t_2$) is executed on the part of the workflow $\work_1$ ($\work_2$). Since transitions are concurrent they do not affect each other, more precise, transition $t_1$ does not change the part of the workflow system on which the transition $t_2$ is executing and viceversa. Therefore, by executing $t_1$ the obtained system is $\work'=\work'_1\para\work_2$, and by executing now $t_2/t_1$ the result is $\work''=\work'_1\para\work'_2$. Similarly, for the execution of the transitions $t_2$ and $t_1/t_2$, the obtained systems are $\work'''=\work_1\para\work'_2$ and $\work''=\work'_1\para\work'_2$ respectively.
      \item when $t_1$ and $t_2$ are executed on the same location; since transitions are concurrent and different, system $\work$ can be represented as $\lconfig{l}{\Data{} }{\ex_1\para\ex_2}\para \work'$ and the transition $t_1$ executes action on the left part of the execution trace, i.e. $\ex_1$, while $t_2$ involves trace $\ex_2$. Again two transitions do not affect each other, since transitions are concurrent, and the reasoning is similar as in the first case with the fact that transitions are executed on the same location.
    \end{itemize}
  \end{proof}

  \begin{lemma}[Correspondence]\label{lm:correspondence-appendix}
    For all actions $\work\lts{}\work'$ and $\opt=\localopt{\work}$, there is a corresponding action $\opt\weak \opt'$  with $\opt'= \localopt{\work'}$.
  \end{lemma}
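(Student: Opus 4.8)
The plan is to proceed by induction on the derivation of the reduction $\work \lts{} \work'$, with a case analysis on the last rule applied, matching each reduction of $\work$ by an appropriate (possibly empty) reduction sequence of $\opt = \localopt{\work}$. The guiding observation is that the optimisation function of Def.~\ref{def:optimisationfun} never alters an $\mathtt{exec}$ predicate and erases only two kinds of communication prefixes: those belonging to $\prefset_{l,l}$ (local $\mathtt{send}$/$\mathtt{recv}$ pairs) and those that duplicate a prefix already collected in the accumulator $\prefset$. Hence every prefix of $\work$ is either retained verbatim in $\opt$ or is one of these erasable communications, and the matching transition in $\opt$ is selected accordingly: a retained action is replayed, an erased action is skipped.

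For the base cases I would argue as follows. If the last rule is (\textsc{Exec}), the fired predicate $\exec{s,F(s),\mapping(s)}$ is retained by $\localopt{\cdot}$, so it suffices to check that it is still enabled in $\opt$, i.e.\ that $\In^{D}(s)\subseteq\Data{i}$ on every $l_i\in\mapping(s)$. This holds because the communications erased by $\localopt{\cdot}$ are exactly those whose payload is already present at the destination (a duplicate transfer, or a local pair whose datum was produced on the spot), so optimisation never discards data on which an $\mathtt{exec}$ depends; thus $\opt$ performs the same $\mathtt{exec}$ in one reduction step and $\opt\weak\opt'$. If the last rule is (\textsc{L-Comm}), the consumed pair lies in $\prefset_{l,l}$ and is absent from $\opt$, which therefore matches with zero steps. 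If the last rule is (\textsc{Comm}), I split on whether the transferred prefix $\send{\messa{d}{p},l,l'}$ is the first of its kind: if it survives optimisation, $\opt$ replays it in one step; if it is a duplicate already recorded in $\prefset$, it has been erased and $\opt$ matches with zero steps.

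The inductive cases (\textsc{L-Par}), (\textsc{Seq}) and (\textsc{Par}) follow from the induction hypothesis together with the homomorphic behaviour of $\localopt{\cdot}$ over parallel composition and over prefixing, read off directly from the defining equations of Def.~\ref{def:optimisationfun}. The case (\textsc{Congr}) requires only that $\localopt{\cdot}$ respects structural congruence, which follows from the shape of those equations and the rules of Fig.~\ref{fig:swirl-congruence-rules}.

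The crux, and the step I expect to be the main obstacle, is establishing the \emph{equality} $\opt'=\localopt{\work'}$ rather than mere bisimilarity, because $\localopt{\cdot}$ threads a single global accumulator $\prefset$ through the whole system, so firing or erasing one prefix could in principle change how later prefixes are classified. To close this gap I would first prove an auxiliary invariant characterising $\localopt{\work}$ by the set of communications that survive optimisation, and showing this set is stable under reduction: consuming an enabled prefix of $\work$ deletes it from both $\work$ and from the surviving set, while leaving the classification of every remaining prefix unchanged (a duplicate stays a duplicate, a surviving transfer stays surviving, a local pair stays local). With this invariant, the residual $\opt'$ of the matching transition coincides, up to $\congr$, with $\localopt{\work'}$, closing each case and hence yielding $\work\bisim\localopt{\work}$ when combined with the symmetric direction.
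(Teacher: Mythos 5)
Your proposal follows essentially the same route as the paper's proof: induction on the derivation of $\work\lts{}\work'$ with a case analysis on the last rule, matching a retained $\mathtt{exec}$ or first-occurrence $\mathtt{send}$/$\mathtt{recv}$ by one step of $\opt$, matching erased local pairs and duplicate transfers by zero steps, and closing the inductive cases (\textsc{L-Par}), (\textsc{Sec}), (\textsc{Par}), (\textsc{Congr}) via the homomorphic equations of Def.~\ref{def:optimisationfun}. The only difference is that you explicitly flag the need for an invariant showing that reduction does not reclassify the remaining prefixes under the threaded accumulator $\prefset$ --- a point the paper's proof asserts implicitly rather than proving --- so your version is, if anything, slightly more careful than the original.
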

     
  \begin{proof} By induction on the derivation of $\work\lts{}\work'$ with the case analysis on  the last applied rule:
    \begin{itemize}
      \item {\bf{(\textsc{Exec})}} then the workflow $\work$ has the form: 
      {\small
      \[
      \work=\prod_{i, l_i \in\mapping(s)}\lconfig{l_i}{\Data{i} }{\exec{s,F(s),\mapping(s)}.\ex_i}
      \]}
      If the data elements necessary for the execution of the step $s$ are on the corresponding locations (i.e. $\In^{D}(s)\subseteq \Data{i}$), the rule (\textsc{Exec}) can be applied and the obtained workflow is:
      {\small
      \[
      \work'=\prod_{i, l_i \in\mapping(s)}\lconfig{l_i}{\Data{i} \cup \Out^{D}(s) }{\ex_i}
      \]}
      The optimised workflow $\opt=\localopt{\work}$ is:
      {\small
      $$\opt=\prod_{i, l_i \in\mapping(s)}\lconfig{l_i}{\Data{i} }{\exec{s,F(s),\mapping(s)}.\localopt{\ex_i,A_i}}$$ }
      with the necessary data elements on the corresponding locations. On the optimised workflow, the rule (\textsc{Exec}) can be applied and there is:
      {\small
      $$\opt'=\prod_{i, l_i \in\mapping(s)}\lconfig{l_i}{\Data{l} \cup \Out^{D}(s)}{\localopt{\ex_i,A_i}}$$ }
      with $\opt'=\localopt{\work'}$. Given that the executed action was $\exec{}$, from $\opt\lts{}\opt'$ there is $\opt\weak\opt'$ where no $\tau$ action is performed.
      \item {\bf{(\textsc{L-Comm})}} in this case, the workflow $\work$ has the form of: 
      {\small
      \[\work=\lconfig{l}{\Data{} }{\send{\messa{d}{p},l,l }.\ex\para \recv{p,l,l }.\ex'}\]}
      If $d\in \Data{}$, the rule (\textsc{L-Comm}) can be applied and the obtained workflow system is:
      {\small
      $$\work'=\lconfig{l}{\Data{} }{\ex\para\ex'}$$}
      The optimised workflow system $\opt=\localopt{\work}$ is:
      {\small
      \begin{align*}
      \opt&=\lconfig{l}{\Data{} }{\localopt{\send{\messa{d}{p},l,l }.\ex\para \recv{p,l,l }.\ex',\prefset}}\\
      &=\lconfig{l}{\Data{} }{\localopt{\ex\para \ex',\prefset}}
      \end{align*}}
      By applying the optimisation function on $\work'$ there is:
      {\small
      \[
        \localopt{\work'}=\lconfig{l}{\Data{} }{\localopt{\ex\para \ex',\prefset}}
      \]}
      Since the workflow $\work$ performs communication ($\tau$), by definition of $\weak$ (in this case $\work_1\weak\work_2$ means that the workflow $\work_1$ evolves into $\work_2$ by applying some number of communications, could be zero), there is:
      {\small
      $$\lconfig{l}{\Data{} }{\localopt{\ex\para\ex',\prefset}}\weak \lconfig{l}{\Data{} }{\localopt{\ex\para\ex',\prefset}}=\opt'$$}
      %%%%%%%%%%%%%%%%%%%%%%%%%%%%%%%%%%%%%%%%%%%%%%%%%%%%%%%%
      \item {\bf{(\textsc{Comm})}} then the workflow $\work$ has the form:
      {\small
      $$\work=\lconfig{l}{\Data{}}{\send{\messa{d}{p},l,l' }.\ex} \para  \lconfig{l'}{D'}{\recv{p,l,l' }.\ex'}$$}
      If $d\in \Data{}$, then the rule (\textsc{Comm}) can be applied and the obtained workflow system is:
      {\small
      $$\work'=\lconfig{l}{\Data{}}{\ex}\para \lconfig{l'}{D'\cup \{d\}}{\ex'}$$}
      For the optimised workflow $\opt=\localopt{\work}$ there are two possibilities depending on whether the communication between locations $l$ and $l'$ is the first one that transfers the data element $d$ or not:
      {\bf{(i)}} if the communication is the first one, then it is not removed by the optimising function and there is:
      {\small
      \begin{align*}
      &\opt=\lconfig{l}{\Data{}}{\send{\messa{d}{p},l,l' }.\localopt{\ex, \prefset\cup \send{p,l,l' }}} \para \\
      &\lconfig{l'}{D'}{\recv{p,l,l' }.\localopt{\ex', \prefset'\cup \recv{\messa{x}{p},l,l' }}}
      \end{align*}}
      and the rule (\textsc{Comm}) can be applied resulting in:
      {\small
      $$\opt'=\lconfig{l}{\Data{}}{\localopt{\ex,\prefset_1}} \para  \lconfig{l'}{D'\cup \{d\}}{\localopt{\ex',\prefset'_1}}$$}
      where $\prefset_1=\prefset\cup \send{p,l,l' }$, similar for $\prefset'_1$. Now by applying optimising function on $\work'$, there is:
      {\small
      $$\localopt{\work'}=\lconfig{l}{\Data{}}{\localopt{\ex,\prefset_1}} \para  \lconfig{l'}{D'\cup \{d\}}{\localopt{\ex',\prefset'_1}}=\opt'$$}
      {\bf{(ii)}} if the executing communication is not the first one, then it is removed by the optimising function and there is:
      {\small
      $$\opt=\lconfig{l}{\Data{}}{\localopt{\ex,\prefset}} \para  \lconfig{l'}{D'}{\localopt{\ex',\prefset'}}$$}
      By definition of $\weak$ it is possible to write $\opt\weak\opt'$ where $\weak$ represents zero communication actions (therefore $\opt=\opt'$). Then, there is:
      {\small
      $$ \lconfig{l}{\Data{}}{\localopt{\ex,\prefset}} \para  \lconfig{l'}{D'}{\localopt{\ex',\prefset}}=\opt'$$}
      and $\localopt{\work'}=\opt$.
      \item {\bf{(\textsc{L-Par})}}
      then the workflow $\work$ has the form: 
      {\small
      $$\work=\lconfig{l}{\Data{} }{\ex_1\para\ex_2}$$}
      With the premise that the action $\lconfig{l}{\Data{}}{\ex_1 }\lts{}\lconfig{l}{D'}{\ex'_1 }$ can be executed, on the workflow system $\work$, the rule (\textsc{L-Par}) can be applied and the obtained term is:
      {\small
      $$\work'=\lconfig{l}{D' }{\ex'_1\para\ex_2}$$}
      The optimised workflow $\opt=\localopt{\work}$ is:
      {\small
      $$\opt=\lconfig{l}{\Data{} }{\localopt{\ex_1\para\ex_2,A}}$$}
      By inductive hypothesis, if $\lconfig{l}{\Data{}}{\ex_1 }\lts{}\lconfig{l}{D'}{\ex'_1 }$ executed then $\opt_1\weak\opt'_1$ with $\opt_1=\localopt{\lconfig{l}{\Data{}}{\ex_1 }}=\lconfig{l}{D_{}}{\localopt{\ex_1,A}}$ and $\opt'_1=\lconfig{l}{D'}{\localopt{\ex'_1,A'}}$.
        With the hypothesis that $\opt_1\weak\opt'_1$, $\opt\weak\opt'$ where $\opt'=\localopt{\work'}$.
      %%%%%%%%%%%%%%%%%%%%%%%%%%%%%%%%%%%%%%%%%%%%%%%%%%%%%%
      \item {\bf{(\textsc{Sec})}}
      then the workflow $\work$ has the form: 
      {\small
      $$\work=\lconfig{l}{\Data{} }{\ex_1.\ex_2}$$}
      With the hypothesis that the action $\lconfig{l}{\Data{}}{\ex_1 }\lts{}\lconfig{l}{D'}{\ex'_1 }$ can be executed, on the workflow system $\work$, the rule (\textsc{Sec}) can be applied and the obtained term is:
      {\small
      $$\work'=\lconfig{l}{D' }{\ex'_1.\ex_2}$$}
      The optimised workflow $\opt=\localopt{\work}$ is:
      {\small
      $$\opt=\lconfig{l}{\Data{} }{\localopt{\ex_1.\ex_2,A}}$$}
      By inductive hypothesis, if $\lconfig{l}{\Data{}}{\ex_1 }\lts{}\lconfig{l}{D'}{\ex'_1 }$ executed then $\opt_1\weak\opt'_1$ with $\opt_1=\localopt{\lconfig{l}{\Data{}}{\ex_1 }}=\lconfig{l}{\Data{}}{\localopt{\ex'_1,A}}$ and $\opt'_1=\lconfig{l}{D'}{\localopt{\ex'_1,A'}}$.
        With the hypothesis that $\opt_1\weak\opt'_1$, there is $\opt\weak\opt'$ where $\opt'=\localopt{\work'}$.
        %%%%%%%%%%%%%%%%%%%%%%%%%%%%%%%%%%%%%%%%%%%%%%%%%%%%%%%%%%%%%%
      \item {\bf{(\textsc{Par})}}
      then the workflow $\work$ has the form: 
      {\small
      $$\work=\work_1\para\work_2$$}
      With the hypothesis that the action $\work_1\lts{}\work'_1$ can be executed, on the workflow system $\work$, the rule (\textsc{Par}) can be applied and the obtained term is:
      {\small
      $$\work'=\work'_1\para\work_2$$}
      The optimised workflow $\opt=\localopt{\work}$ is:
      {\small
      $$\opt=\localopt{\work_1,A}\para\localopt{\work_2,A}$$}
      By inductive hypothesis, if $\work_1\lts{}\work'_1$ executed then $\opt_1\weak\opt'_1$ with $\opt_1=\localopt{\work_1}=\localopt{\work_1,\prefset}$ and $\opt'_1=\localopt{\work'_1,A'}$.
      With the hypothesis that $\opt_1\weak\opt'_1$, there is $\opt\weak\opt'$ where $\opt'=\localopt{\work'}$.
      %%%%%%%%%%%%%%%%%%%%%%%%%%%%%%%%%%%%%%%%
      \item {\bf{(\textsc{Congr})}}
      Given the workflow $\work$, the executed transition is $\work\lts{}\work'$ with the premise $\work\congr\work_1 \lts{}\work'_1 \congr \work'$.
      By induction hypothesis, there is the optimised system $\opt_1=\localopt{\work_1}$ such that $\opt_1\weak\opt'_1$ with $\opt'_1=\localopt{\work'_1}$.
      For $\opt=\localopt{\work}$ by the definition of the optimising function, $\work\congr\work_1$ implies $\opt=\localopt{\work}\congr\localopt{\work_1}=\opt_1$. Then the rule (\textsc{Congr}) can be applied and $\opt\weak\opt'$ where $\opt'=\localopt{\work'}$.
    \end{itemize}
  \end{proof}

  \begin{lemma}[Completeness] \label{lm:completeness-appendix}
    For workflow system $\work$, such that $\opt=\localopt{\work}$ if $\opt\lts{} \opt'$ then there exists the corresponding action $\work\weak\work'$ with $\opt'= \localopt{\work'}$. 
  \end{lemma}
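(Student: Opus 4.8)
\emph{Proof approach.} The plan is to proceed by induction on the derivation of $\opt\lts{}\opt'$ with a case analysis on the last rule applied, reading the argument of Lemma~\ref{lm:correspondence-appendix} backwards. The observation underpinning every case is that $\localopt{\cdot}$ only ever deletes communication prefixes — local ones (those $\mu\in\prefset_{l,l}$) and redundant duplicates (those $\mu\in\prefset$) — while preserving every $\mathtt{exec}$ prefix together with the sequential/parallel skeleton of each trace. This has two structuring consequences. First, $\opt$ contains no local communication whatsoever, so (\textsc{L-Comm}) can never be the last rule in $\opt\lts{}\opt'$ and that case is vacuous. Second, any communication occurring in $\work$ but not in $\opt$ was erased precisely because it is redundant, and such a prefix always appears as a matched $\mathtt{send}$/$\mathtt{recv}$ pair that is fireable as a $\tau$-step without changing the data distribution.

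\emph{Communication case.} If $\opt\lts{\tau}\opt'$ by (\textsc{Comm}), the fired pair survived optimisation, so the identical pair occurs in $\work$, possibly preceded in its traces by redundant communications that $\localopt{\cdot}$ removed. I would first discharge those preceding redundant pairs in $\work$ as $\tau$-steps and then fire the surviving pair; as every action involved is a communication, this gives $\work\weak\work'$. Since firing the erased redundant communications does not alter the optimised image, a direct computation then yields $\opt'=\localopt{\work'}$.

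\emph{Structural cases.} I would dispatch (\textsc{L-Par}), (\textsc{Sec}), (\textsc{Par}) and (\textsc{Congr}) through the induction hypothesis exactly as in Lemma~\ref{lm:correspondence-appendix}: the premise transition is lifted to a weak transition of the relevant sub-term of $\work$ and the enclosing context is reattached, using for (\textsc{Congr}) that $\work\congr\work_1$ implies $\localopt{\work}\congr\localopt{\work_1}$.

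\emph{Main obstacle: the (\textsc{Exec}) case.} When $\opt=\prod_i\lconfig{l_i}{\Data{i}}{\exec{s,F(s),\mapping(s)}.\opt_i}$ fires the step, the same $\mathtt{exec}$ prefix is present in $\work$, but there it may be guarded by communications that optimisation erased — for instance a local $\recv{p,l,l}$ sequentially preceding the step, or a duplicate $\recv{p,l,l'}$ that the optimiser collapsed into a direct $\mathtt{exec}$. The delicate step is to reconstruct exactly this prefix of erased guards, show that they form matched pairs that can all be discharged as $\tau$-steps — reaching some $\work''$ with $\localopt{\work''}$ equal to $\opt$ up to $\congr$ and with $\In^{D}(s)\subseteq\Data{i}$ on every $l_i\in\mapping(s)$ — and only then apply (\textsc{Exec}) to obtain $\work''\lts{}\work'$; altogether this gives $\work\weak\work'$, and a final check that $\localopt{\cdot}$ commutes with firing the step delivers $\opt'=\localopt{\work'}$. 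I expect the crux to be proving that the erased guards are always simultaneously fireable — that no removed $\mathtt{recv}$ is ever left without its partner $\mathtt{send}$ — since this is exactly what guarantees the original system can reach the point where the step becomes enabled.
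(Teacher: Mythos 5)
Your proposal is correct and takes essentially the same approach as the paper, whose own proof of this lemma consists of a single sentence deferring to the argument of the Correspondence lemma (induction on the derivation of $\opt\lts{}\opt'$ with case analysis on the last applied rule). In fact you supply strictly more detail than the paper does, and the obstacle you flag in the (\textsc{Exec}) case --- showing that every erased guard is a matched, fireable $\mathtt{send}$/$\mathtt{recv}$ pair so that $\work$ can actually reach the point where the step is enabled --- is a genuine subtlety that the paper's one-line proof leaves entirely implicit.
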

  \begin{proof}
    By induction on the derivation of $\opt\lts{} \opt'$ where $\opt=\localopt{\work}$ with the case analysis on the last applied rule.    The proof is similar to the one of Lemma~\ref{lm:correspondence-appendix}.
  \end{proof}

  \begin{theorem}[Operational Correspondance]\label{thm:operational_correspondence-appendix}
    For any distributed workflow system $\work$, $\work\bisim\localopt{\work}$.
  \end{theorem}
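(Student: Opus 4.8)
The plan is to exhibit a concrete candidate relation and prove it is a weak barbed bisimulation; the theorem then follows because $\bisim$ is, by Def.~\ref{def:barbedbisimilarity}, the largest such relation. I would take
\[
  \rel = \{(\work,\localopt{\work}) : \work\in\mathcal{W}_{\work}\},
\]
pairing each workflow system with its optimised image, and show that both $\rel$ and its inverse $\rel^{-1}$ are weak barbed simulations. Since $\work\rel\localopt{\work}$ holds for every $\work$ by construction, this yields $\work\bisim\localopt{\work}$ at once.

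The two transition clauses are essentially already available. For $\rel$, the requirement that $\work\lts{}\work'$ be matched by $\localopt{\work}\weak\localopt{\work'}$ with the targets again related is exactly Lemma~\ref{lm:correspondence-appendix}, the pair $(\work',\localopt{\work'})$ lying in $\rel$ by definition. For $\rel^{-1}$, the dual requirement that every move $\localopt{\work}\lts{}\opt'$ be matched by $\work\weak\work'$ with $\opt'=\localopt{\work'}$ is exactly Lemma~\ref{lm:completeness-appendix}. Hence the transition conditions reduce to citing the correspondence and completeness lemmas.

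What the lemmas do not state verbatim are the two barb clauses, which I would discharge next. For the forward barb, suppose $\work\barb{\nu}$ with $\nu=\exec{s,F(s),\mapping(s)}$; then $\work$ admits the corresponding exec reduction $\work\lts{}\work'$, and feeding this reduction into Lemma~\ref{lm:correspondence-appendix} (its (\textsc{Exec}) case) produces a matching weak move of $\localopt{\work}$ that performs the very same exec action, so $\localopt{\work}\wbarb{\nu}$. The backward barb is handled dually through Lemma~\ref{lm:completeness-appendix}. Both directions rest on the structural observation that the optimisation function of Def.~\ref{def:optimisationfun} only rewrites $\mathtt{send}$/$\mathtt{recv}$ prefixes to $\nul$ and never deletes, duplicates, or reorders an $\mathtt{exec}$ predicate, so the observable step executions coincide on the two sides.

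The main obstacle I anticipate is the backward barb. In $\localopt{\work}$ some receive prefixes that guarded an execution may have been erased, so an observable $\nu$ can become enabled ``earlier'' than in $\work$; to match it I must verify that the communications suppressed by the optimiser are always performable in $\work$ — namely the redundant local transfers, firable by (\textsc{L-Comm}) as soon as the datum is present, and the duplicate inter-location sends of an already-delivered datum. Showing that these removed prefixes never block progress, so that $\work$ can reach the enabling configuration by a sequence of $\tau$-actions and thereby exhibit $\work\wbarb{\nu}$, is the crux. Once it is in place, the four clauses together certify that $\rel$ is a weak barbed bisimulation, and the theorem follows.
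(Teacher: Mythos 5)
Your proposal follows essentially the same route as the paper: both exhibit the relation pairing each $\work$ with $\localopt{\work}$, discharge the two transition clauses by citing Lemma~\ref{lm:correspondence-appendix} and Lemma~\ref{lm:completeness-appendix}, and settle the barb clauses by observing that the optimisation function never touches $\exec{\cdot}$ predicates. The only differences are cosmetic: the paper restricts the relation to reachable systems, and it dispatches the backward barb in one line where you (reasonably) flag it as the delicate point.
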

  \begin{proof} It is necessary to show that the relation 
    \[
      \rel=\{(\work,\localopt{\work}) \text{ where }  \work \text{ is reachable }\}
    \]
      is weak barb bisimulation. 
  
    For the first condition of Definition~\ref{def:barbedbisimilarity}, if $\work$ has a barb $\nu$, then the optimised term $\localopt{\work}$ has the same barb, since the optimising function does not interfere with the predicates $\exec{\cdot}$. The same reasoning can be applied on the optimised term $\localopt{\work}$, if it has barb $\nu$, then the workflow $\work$ has it as well.
  
    If workflow $\work$ can perform the transition $\work\lts{}\work'$ and $\opt=\localopt{\work}$, then by Lemma~\ref{lm:correspondence-appendix}, there is $\opt\weak\opt'$ with $\opt'=\localopt{\work'}$ ($(\work',\localopt{\work'})\in\rel$).
  
    In the other direction, if for a given workflow system $\work$, $\opt=\localopt{\work}$ and $\opt\lts{}\opt'$, then by Lemma~\ref{lm:completeness-appendix}, there is $\work\weak\work'$ with $\opt'=\localopt{\work'}$ ($(\work',\localopt{\work'})\in\rel$).  
  \end{proof}
\end{section}

\begin{section}{Evaluation} \label{appendix:evaluation}
  \begin{figure}[t]
    \begin{center}
        \includegraphics[scale=0.85]{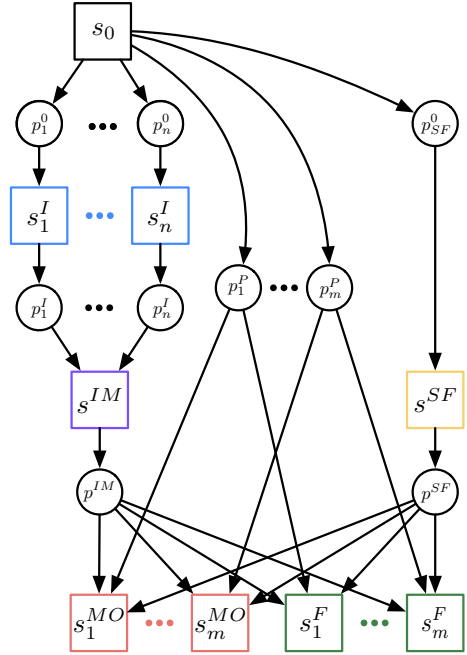}
        \caption{Graphical representation of the 1000 Genomes workflow. The workflow contains five classes of steps: individuals (blue), individuals\_merge (violet), sifting (yellow), mutations\_overlap (red), and frequency (green). The initial step $s_{0}$ is an auxiliary step that sends each input data element to the correct location for processing.}
        \label{fig:1000genome-app}
    \end{center}
    \vspace{-7mm}
  \end{figure}
  
  \begin{table}[t]
    \begin{center}
      \begin{tabular}{l|c|c|c}
        \toprule
        Step name & \# steps & \# locations & Location name  \\ 
        \midrule
        $s_{0}$ & 1 & 1 & $l^{d}$\\[1mm]
        \texttt{individuals} & $n$ & $a$ & $l^{I}_{j}$, $j=1,\ldots,a$ \\[1mm]
        \texttt{individuals\_merge} & 1 & 1 & $l^{IM}$\\ [1mm]
        \texttt{sifting} & 1 & 1 & $l^{SF}$ \\ [1mm]
        \texttt{mutations\_overlap} & $m$ & $b$ & $l^{MO}_{t}$, $t=1,\ldots,b$\\ [1mm]
        \texttt{frequency} & $m$ & $c$ & $l^{F}_{k}$, $k=1,\ldots,c$ \\
        \bottomrule
      \end{tabular}
    \end{center}
    \caption{Characterisation of the 1000 Genomes workflow instance to be encoded in a SWIRL workflow system.}
    \label{tab:1000genome-locations}
  \end{table}
    
  This section tests the flexibility of the SWIRL representation on the 1000 Genomes workflow \cite{Silva:19}, a Bioinformatics pipeline aiming at fetching, parsing and analysing data from the 1000 Genomes Project to identify mutational overlaps and provide a null distribution for rigorous statistical evaluation of potential disease-related mutations. Fig.~\ref{fig:1000genome-app} shows a slightly simplified version of the 1000 Genomes workflow model. We removed some ports to simplify the notation, but their absence does not affect the reasoning reported in the rest of this section. Table~\ref{tab:1000genome-locations} reports the number of steps and mapping locations for each class of steps in a distributed instance of the 1000 Genomes workflow. The mapping between data elements and ports is written as follows:

  {\footnotesize
  \vspace{2mm}
  \[
    \mapdati= \left\{  \begin{aligned}  & (d^{0}_i, p^{0}_i), (d^{P}_h, p^{P}_h), (d^{0}_{SF}, p^{0}_{SF}),\\
      & (d^{I}_i, p^{I}_i),(d^{IM}, p^{IM}) , (d^{SF}, p^{SF}) \end{aligned} \right\}
  \]}
  where $i\in[1, n]$ and $h \in [1, m]$.
    
  The corresponding workflow system $\work$ can be constructed using the encoding function $\enc{\cdot} : \mathcal{W}_{\instance} \lts{} \mathcal{W}_{\work}$ (Sec.~\ref{subsec:encoding}). We did not explicitly consider the case of ports without input steps at the beginning of a workflow model. However, this limitation can be easily overcome by introducing an auxiliary initial step $s_{0}$ with an empty execution trace $\nul$, whose only purpose is distributing initial data from the local location $l_{d}$ to the target locations for further processing. The resulting workflow system is then written as follows:
    
  {\footnotesize
  \begin{align*}
    \work = &\lconfig{l^{d}}{\emptyset}{\ex^{d}}\para \lconfig{l^{SF}}{\emptyset}{\ex^{SF}}\para \prod_{j=1}^{a} \lconfig{l^{I}_{j}}{\emptyset}{\ex^{I}_{j}}\para \\
            &\lconfig{l^{IM}}{\emptyset}{\ex^{IM}}\para \prod_{t=1}^{b} \lconfig{l^{MO}_{t}}{\emptyset}{\ex^{MO}_{t}}\para \prod_{k=1}^{c} \lconfig{l^{F}_{k}}{\emptyset}{\ex^{F}_{k}}
  \end{align*}}
  where the execution trace $e^{d}$ iterates over locations $l^{I}_{j}\in\mapping(s^{I}_i)$, $l^{MO}_{t}\in\mapping(s^{MO}_h)$ and $l^{F}_{k}\in\mapping(s^{F}_h)$ as follows:

  {\footnotesize
  \begin{align*}
    \ex^{d}= &\prod_{i=1}^{n} \send{\messa{d^{0}_i}{p^{0}_i},l^{d},l^{I}_{j}}\para \send{\messa{d^{0}_{SF}}{p^{0}_{SF}},l^{d},l^{SF}}\para\\
    &\prod_{h=1}^{m} (\send{\messa{d^{P}_h}{p^{P}_h},l^{d},l^{MO}_{t}}\para \send{\messa{d^{P}_h}{p^{P}_h},l^{d},l^{F}_{k}})
  \end{align*}}
  Note that the number of locations could be smaller than the number of steps. Hence, there could be a case when more steps are mapped to the same location.
    
  The behaviour of each location $l^{I}_{j}$ is to collect all the steps $s^{I}_i$ mapped to it, receive input data from them, execute the computation of step $s^{I}_{j}$ and send the result to location $l^{IM}$ to perform the individuals\_merge step:
    
  {\footnotesize
  \begin{align*}
    \ex^{I}_{j} = & \prod_{s^{I}_i\in Q(l^{I}_{j})} \recv{p^{0}_i,l^{d},l^{I}_{j}}.\exec{s^{I}_i,\inout{d^{0}_i}{d^{I}_{i} },\{l^{I}_{j}\}}.\\
    &\send{\messa{d^{I}_{i}}{p^{I}_{i}},l^{I}_{j},l^{IM}}
  \end{align*}}
  Location $l^{IM}$ receives all the data elements sent by the individuals, computes the individuals\_merge step and sends the produced data to all mutations\_overlap and frequency steps with the corresponding locations. Its execution trace is:
    
  {\footnotesize
    \begin{equation*}
      \ex^{IM} = \begin{aligned}[t]
        &\prod_{i=1}^{n}\recv{p^{I}_{i}, l^{I}_{j},l^{IM}}.\\
        &\exec{s^{IM},\inout{d^{I}_{1},\ldots, d^{I}_{n}}{d^{IM}},\{l^{IM}\}}.\\
        &\big(\prod_{t=1}^{b}\prod^{|\load(l^{MO}_{t})|} \send{\messa{d^{IM}}{p^{IM}},l^{IM},l^{MO}_{t}}\para \\
        &\prod_{k=1}^{c}\prod^{|\load(l^{F}_{k})|} \send{\messa{d^{IM}}{p^{IM}},l^{IM},l^{F}_{k}}\big)
      \end{aligned}
    \end{equation*}
  }

  The behaviour of location $l^{SF}$, which executes the sifting step, is analogous to the one of $l^{IM}$. Note that all these communication schemes would greatly benefit from an optimised $\mathtt{broadcast}$ operator. Adding more advanced communication primitives and their  composition to SWIRL is planned. 
    
  Steps $s^{MO}_{t}$ and $s^{F}_{t}$ are equivalent in terms of communication patterns, so we just consider the first one. Each location $l^{MO}_{t}$ receives data from locations $l^{IM}$, $l^{SF}$, and $l^{d}$ and executes step $s^{MO}_{i}$. The resulting execution trace $\ex^{MO}_{t}$ is written as follows:  
    
  {\footnotesize
    \begin{align*}
      \ex^{MO}_{t} = 
        & \prod_{s^{MO}_h\in Q(l^{MO}_{t})}
        \begin{aligned}[t]\big(
          &\recv{p^{IM}, l^{IM}, l^{MO}_{t}}\para\\
          &\recv{p^{SF}, l^{SF}, l^{MO}_{t}}\para\\
          &\recv{p^{P}_h, l^{d}, l^{MO}_{t}}
        \big).\end{aligned}\\
        &\exec{s^{MO}_{h},\inout{d^{SF}, d^{IM},d^{P}_h }{\emptyset },\{l^{MO}_{t}\}}
    \end{align*}
  }

  When $m > b$, execution traces $e^{IM}$ and $e^{MO}_{t}$ can be optimised to communicate each data element just once between locations $l^{IM}$ and $l^{MO}_{t}$. The optimised traces $o^{IM}$ and $o^{MO}_{t}$, computed by the optimisation function $\localopt{\work}$, can be written as follows:

  {\footnotesize
    \begin{align*}
      o^{IM} &= 
        \begin{aligned}[t]
          &\prod_{i=1}^{n}\recv{p^{I}_{i}, l^{I}_{j},l^{IM}}.\\
          &\exec{s^{IM},\inout{d^{I}_{1},\ldots, d^{I}_{n}}{d^{IM}},\{l^{IM}\}}.\\
          &\big(\prod_{t=1}^{b} \send{\messa{d^{IM}}{p^{IM}},l^{IM},l^{MO}_{t}}\para \\
          &\prod_{k=1}^{c}\prod^{|\load(l^{F}_{k})|} \send{\messa{d^{IM}}{p^{IM}},l^{IM},l^{F}_{k}}\big)
        \end{aligned}\\[3mm]
      o^{MO}_{t} &=
        \begin{aligned}[t]
          &\recv{p^{IM}, l^{IM}, l^{MO}_{t}}\para\recv{p^{SF}, l^{SF}, l^{MO}_{t}}\para\\
          &\prod_{s^{MO}_h\in Q(l^{MO}_{t})} \recv{p^{P}_h, l^{d}, l^{MO}_{t}}.\\
          &\exec{s^{MO}_{h},\inout{d^{SF}, d^{IM},d^{P}_h }{\emptyset },\{l^{MO}_{t}\}}
        \end{aligned}
    \end{align*}
  }

  With analogous reasoning, it is possible to apply further optimisations whenever $m > c$ and $n > a$, significantly reducing data transfer overhead.
\end{section}
\end{document}